\newcommand{\ipcoskip}[1]{#1}
\newcommand{\ifipco}[2]{#2}
\newcommand{\fullv}[1]{#1}
\newcommand{\ipcoprooflater}[4]{\begin{proof}#4\end{proof}}
\newcommand{\ipcoproofoflater}[4]{\begin{proofof}{\Cref{#2}}#4\end{proofof}}
\newenvironment{proofof}[1]{\begin{proof}[Proof of #1]}{\end{proof}}
\newcommand{\ipcoqed}{}
\newcommand{\ssum}{\sum}
\newenvironment{restatable}[2]{}{}
\spnewtheorem*{inclaim}{Claim}{\bfseries\upshape}{\itshape}
\spnewtheorem{observation}{Observation}{\bfseries}{\upshape}
\spnewtheorem{mydefinition}[definition]{Definition}{\bfseries}{\upshape}
\renewenvironment{definition}{\begin{mydefinition}}{\end{mydefinition}}
\spnewtheorem{procedure}{Procedure}{\bfseries}{\upshape}
\newtheorem{theorem}{Theorem}[section]
\newtheorem{lemma}[theorem]{Lemma}
\newtheorem*{inclaim}{Claim}
\newtheorem{corollary}[theorem]{Corollary}
\theoremstyle{definition}
\newtheorem{definition}[theorem]{Definition}
\definecolor{linkcol}{rgb}{0,0,0.35}
\definecolor{todocol}{rgb}{0.6,0,0}
\definecolor{citecol}{rgb}{0.1,0.35,0}
\definecolor{mydarkgreen}{RGB}{0,100,0}
\newcommand{\E}{\mathbb{E}}
\DeclareMathOperator{\supp}{supp}
\DeclareMathOperator{\head}{head}
\DeclareMathOperator{\tail}{tail}
\newcommand{\disconn}{\divideontimes}
\newcommand{\fpsabv}{FPS}
\newcommand{\fps}{fractional path solution}
\newcommand{\sttf}{Steiner tree over time}
\newcommand{\saddc}{subadditive cover over time}
\newcommand{\subc}{submodular cover over time}
\newcommand{\irp}{IRP}
\newcommand{\sjrp}{SJRP}
\newcommand{\znn}{\mathbb{Z}_{\geq0}}
\newcommand{\Rnn}{\mathbb{R}_{\geq 0}}
\newcommand{\lovaszextension}{Lov\'asz extension}
\newcommand{\pprefix}[2]{#1_{\triangleright #2}}
\newcommand{\twl}{a}
\newcommand{\twr}{b}
\newcommand{\dW}{\mathcal{W}}
\newcommand{\relaxation}{\textsc{(lp)}}
\newcommand{\trunc}[2]{ {#1}^{| #2}}
\newcommand{\truncss}[3]{ {#1}^{#2| #3}}
    \newcommand{\todo}[1]{\ignorespaces}
\newcommand{\nnote}[1]{\ignorespaces}
\newcommand{\todo}[1]{{\em\color{red!50!black}TODO: #1}}
\newcommand{\nnote}[1]{{\em\color{blue!50!black}Neil: #1}}
\begin{document}
\title{Improved Approximation Algorithms for Inventory Problems\thanks{
Supported by Dutch Science Foundation (NWO) Vidi grant 016.Vidi.189.087.
    Part of this work was done while both authors were affiliated with the 
Department of Econometrics \& Operations Research, Vrije Universiteit Amsterdam.}
    } 
%
%
\author{Thomas Bosman\footnote{%
    Booking.com, Amsterdam, The Netherlands.
Email: \texttt{tbosman@gmail.com}.
}~~and
    Neil Olver\footnote{
    Department of Mathematics, London School of Economics and Political Science, London, United Kingdom.
Email:~\texttt{N.Olver@lse.ac.uk}.
    Visiting appointment at CWI, Amsterdam, The Netherlands.
}}
%
%
%
%
\maketitle              

\begin{abstract}
    We give new approximation algorithms for the submodular joint replenishment problem and the inventory routing problem, using an iterative rounding approach. 
    In both problems, we are given a set of $N$ items and a discrete time horizon of $T$ days in which given demands for the items must be satisfied.
    Ordering a set of items incurs a cost according to a set function, with properties depending on the problem under consideration.
Demand for an item at time $t$ can be satisfied by an order on any day prior to $t$, but a holding cost is charged for storing the items during the intermediate period;
the goal is to minimize the sum of the ordering and holding cost. 

    Our approximation factor for both problems is $O(\log \log \min(N,T))$;
    this improves exponentially on the previous best results.

\ifipco{\keywords{approximation algorithms \and iterative rounding \and inventory.}}{}

\end{abstract}

\section{Introduction}
The inventory problem studied in this paper captures a number of related models studied in the supply chain literature. One of the simplest is the dynamic economic lot size model~\cite{wagner1958dynamic}. 
Here we have varying demand for a single item over $T$ time units. Demand at time $t$ or later can be satisfied by an order at time $t$ 
(but not vice versa). 
For each day, there is a per-unit cost for holding the items in storage; 
there is also a fixed setup cost for ordering any positive quantity of the item, which is the same for each day. 
We want to decide on how many items to order on each day so as to minimize the total ordering and holding cost. 

The \emph{joint replenishment problem (JRP)} generalizes this problem to multiple items. We now have a unique holding cost for each day and each item, and a per item setup cost for ordering any quantity of that item. Furthermore, there is a general setup cost for ordering any items at all. This setup cost structure is called \emph{additive}. 
While having limited expressive power in comparison to some of the more complex generalizations that have been studied, the additive joint replenishment problem is long known to be NP-hard~\cite{arkin1989computational}. 
This problem has attracted considerable attention from the theory community in the past, resulting in a line of progressively stronger approximation algorithms~\cite{levi2006primal,levi2008constant}, the best of which gives an approximation ratio of 1.791~\cite{bienkowski2014better}.

A more general version of this problem uses an ordering cost structure in which the setup cost is a submodular function of the items ordered. This model, introduced by Cheung et al.~\cite{cheung2016submodular}, is called the \emph{submodular joint replenishment problem (SJRP)} and captures both the additive cost structure as well as other sensible models. 
In the same work, the authors give constant factor approximation algorithms for special cases of submodular cost functions, such as tree cost, laminar cost (i.e.,
coverage functions where the sets form a laminar family) and cardinality cost (where the cost is a concave function of the number of distinct items). For the general case, they provide an $O(\log NT)$ approximation algorithm. 

In the \emph{inventory routing problem (\irp)} setup costs are routing costs in a metric space. There is a root node and every item corresponds to a point in the metric. The setup cost for a given item set is then given by the length of the shortest tour visiting the depot and every item in the set.
The usual interpretation of the model is that the root node represents a central depot and every other point in the metric represents a warehouse to be supplied from the depot. (To streamline terminology with the joint replenishment problem, we will keep using the term ``item'' rather than ``location''.) 

The \irp{} has been extensively studied in the past three decades (see~\cite{coelho2013thirty} for a recent survey), but primarily from a computational perspective. 
But very little is known about its approximability.
Fukunaga et al.~\cite{fukunaga2014deliver} presented a constant factor approximation under the restriction that orders for a given item must be scheduled \emph{periodically}. 
This restriction appears to significantly simplify the
construction of an approximation algorithm, as prior to this work the best known polynomial time approximation algorithms gave (incomparable) $O(\log N)$~\cite{fukunaga2014deliver} 
\ipcoskip{(obtained by combining a constant factor approximation ratio for trees with standard metric embedding arguments)} 
and $O(\log T)$~\cite{nagarajan2016approximation} performance guarantees.

Nagarajan and Shi~\cite{nagarajan2016approximation} break the logarithmic barrier for both \irp{} and \sjrp, under the condition that
holding costs grow as a fixed degree monomial. 
This is a very natural restriction; in particular it captures the case where holding an item incurs a fixed rate per unit per day, 
depending only on the item. 
Building on their approach, we improve exponentially on their $O(\log T / \log \log T)$ approximation factor.
We also provide some general techniques to turn (sub)logarithmic
approximation algorithms in terms of $T$ into equivalent algorithms in terms of $N$;
and we are able to obtain results without restriction on the holding costs.
Our main contributions are summarized in the following theorems. 

\begin{theorem}
\begin{restatable}{thm}{irpapx}
  \label{thm:irpapx}
  There is a polynomial time $O(\log \log \min(N,T))$-approximation algorithm for the inventory routing problem. 
\end{restatable}
\end{theorem}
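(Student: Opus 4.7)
The plan is to reduce the \irp{} to a \sttf{} (\sttfabv) instance---where the ordering cost on each day is the Steiner tree cost of the ordered items together with the depot---and then to solve \sttfabv{} via iterative LP rounding. This reduction is natural because in \irp{} the per-day setup cost is an optimal tour length, which is within a constant of the corresponding Steiner tree cost, so only a constant factor is lost in passing to \sttfabv.

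For the LP, I would work with a \fps{} (\fpsabv) style relaxation \relaxation{} in which variables $x_{i,s,t}$ indicate the extent to which demand for item $i$ at time $t$ is served by an order placed at time $s\le t$, and auxiliary cut or flow variables model the Steiner tree cost on each day. The holding cost becomes a linear function of $x$, and the ordering cost is captured by a standard cut LP for Steiner tree. Given an optimal fractional solution, one rounding phase should select a set of representative delivery times together with an integer tree on each chosen day, charging the selection against the fractional mass it consumes; the natural threshold rounding incurs an $O(1)$ loss per phase.

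The main obstacle, and the step distinguishing this result from the earlier $O(\log T)$ and $O(\log T/\log\log T)$ bounds, is showing that iterative rounding terminates in $O(\log \log T)$ phases rather than $\Omega(\log T)$. The idea I would pursue is that each phase shrinks the effective horizon not by a constant factor but by taking a root: after committing to a constant-cost integer partial solution drawn from the fractional one, the residual \fpsabv{} instance should be supported on an interval structure of length at most $T^{1/c}$ for some constant $c$, because each rounded tree delivery covers an entire window of subsequent demand whose length scales with its own cost. Iterating this contraction $k$ times drives $T$ down to $O(1)$ after $k=O(\log\log T)$ phases, and multiplying the per-phase $O(1)$ loss yields the claimed approximation. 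Proving that the appropriate notion of ``residual horizon'' really does contract polynomially, rather than by a constant factor, is where the Steiner structure of \sttfabv{} (which lets one order amortize over many nearby items nearly for free) must be exploited.

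Finally, to replace $\log\log T$ by $\log\log \min(N,T)$, I would preprocess the instance so that $T \le \operatorname{poly}(N)$ whenever $N<T$. Days with no new demands can be contracted, and within a long demand-free interval a geometric discretization of holding costs loses only a constant factor; this reduces $T$ to $\operatorname{poly}(N)$ without asymptotically changing $\log \log T$. Combining the preprocessing with the iterative rounding analysis yields the $O(\log\log\min(N,T))$ bound of \Cref{thm:irpapx}.
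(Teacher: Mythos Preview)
Your proposal has the right high-level architecture---reduce to a Steiner-tree-over-time covering problem, work with a path-based LP, and round iteratively---but the heart of the argument, the mechanism that produces the $O(\log\log T)$ factor, is missing and the substitute you sketch would not work.

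You propose that each rounding phase contracts the ``residual horizon'' polynomially, from $T$ to $T^{1/c}$, and you explicitly flag that proving this contraction is the crux you have not worked out. In fact the paper's algorithm never contracts the horizon during rounding, and there is no evident reason a single threshold-rounding pass should do so. The actual source of $\log\log T$ is quite different. First, the paper reduces (via the Nagarajan--Shi median trick, which you do not mention) to a pure covering problem with no holding costs, and then further reduces to instances whose demand windows form a \emph{left-aligned} family; this gives each window a level in $\{0,\ldots,\log T\}$. The rounding then samples every fractional path with probability scaled by $K\log\log T$ and performs a ``split and shift'' step: a path is cut at its \emph{fully redundant} edges, meaning edges that are redundant simultaneously at all $\log T$ levels, and the resulting subpaths are shifted to earlier days within the appropriate sow phases. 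The $\log\log T$ oversampling is exactly what a union bound over the $\log T$ levels needs so that each individual edge is fully redundant with constant probability; this yields a constant-factor decrease in fractional cost per iteration at a cost of $O(\log\log T)$ per iteration. None of this structure---left-aligned windows, levels, sow/reap phases, redundant edges---appears in your sketch, and without it there is no argument.

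Your preprocessing idea for replacing $T$ by $\min(N,T)$ is also different from the paper's and, as stated, insufficient: contracting demand-free days does not by itself bound $T$ by $\operatorname{poly}(N)$, since there can be $NT$ demand pairs. The paper instead first reduces to the covering problem, then shows (via a singleton-cost bucketing and a sparsity lemma for the LP) that one can afford to order all of $V$ every $N^2$ days, after which the horizon can be truncated.
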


\begin{theorem}
\begin{restatable}{thm}{sjrpapx}
  \label{thm:sjrpapx}
  There is a polynomial time $O(\log \log\min(N,T))$-approximation algorithm for the submodular joint replenishment problem. 
\end{restatable}
\end{theorem}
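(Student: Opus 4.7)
The plan is to formulate SJRP as a convex programming relaxation using the Lov\'asz extension of the submodular ordering cost, and then apply an iterative rounding procedure whose analysis yields only $O(\log\log\min(N,T))$ loss.

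For the LP, I would use variables $y_{i,s,t}$ giving the fraction of item $i$'s demand at time $t$ served by an order on day $s \le t$, together with auxiliary variables $z_{i,s} := \sum_{t \ge s} y_{i,s,t}$ describing the fractional presence of item $i$ in the order placed on day $s$. Holding costs are linear in $y$, and the ordering cost on day $s$ equals $\hat f(z_{\cdot,s})$ where $\hat f$ denotes the Lov\'asz extension of the submodular function $f$; since $\hat f$ is convex and efficiently evaluable via the greedy formula, the relaxation is polynomially solvable and no structural restriction on the holding costs is required. This gives a lower bound on OPT.

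The rounding proceeds in phases. In each phase I would commit fully to every pair $(i,s)$ whose fractional value $z_{i,s}$ exceeds a carefully chosen threshold, delete the demand thereby satisfied, and re-solve the LP on the residual instance. The crucial feature---and the source of the doubly-logarithmic improvement---is that the thresholds are driven toward $1$ at a doubly-exponential rate, e.g.\ $-\log\tau_{k+1} = \Theta(\log(-\log\tau_k))$, so that $O(\log\log T)$ phases suffice before all demand is covered. Within each phase, a charging argument that exploits subadditivity and positive homogeneity of $\hat f$ should bound the actual ordering cost of committed orders by $O(1)$ times the fractional LP mass they absorb, yielding only a constant-factor loss per phase.

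To replace $T$ by $\min(N,T)$, I would compress demand-free subintervals into single representative time points, rescaling holding costs so that the cost of any feasible schedule changes by at most a constant factor; this reduces the effective horizon to $\mathrm{poly}(N)$, and running the above procedure then delivers $O(\log\log N)$. The main obstacle is the charging analysis sustaining a doubly-exponential threshold schedule with only constant loss per phase: one must show that the residual LP obtained after commitment is both feasible and has fractional value bounded by $\mathrm{OPT}$ minus the cost already paid, which is delicate because committing to orders interacts nontrivially with the submodular cost structure and with the general holding cost through the coverage of future demands. Making this interaction quantitatively tight---rather than losing an additional $\log$ factor---is where the technical heart of the proof lies.
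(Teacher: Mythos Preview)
Your proposal has the right flavor at the very coarsest level---use the Lov\'asz-extension relaxation and round iteratively with only $O(\log\log T)$ ``expensive'' phases---but it is missing the two structural ingredients that actually make the analysis go through, and the rounding mechanism you describe is not the one the paper uses.

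First, the paper does \emph{not} round the holding-cost LP directly. It first passes (via the Nagarajan--Shi median trick) to a pure covering problem with time windows, and then---crucially---reduces to instances whose time windows form a \emph{left-aligned} family with respect to the dyadic intervals. This left-aligned structure is what endows the instance with a notion of ``level'' of depth $\log T$, and it is what permits the algorithm to \emph{merge} adjacent days at the end of each of the $\log T$ outer iterations without destroying feasibility. Your proposal has no analogue of this reduction; without it, there is no obvious way to shrink the time horizon while preserving the covering constraints, and hence no mechanism that naturally terminates after $\log T$ outer rounds.

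Second, the paper's inner step is not a threshold schedule at all. For each day $t$ it appeals to a dichotomy lemma (Lemma~\ref{lem:concentration}): either some level set $L_\theta(x^t)$ is $\tfrac{1}{32\log\log T}$-supported, meaning $\hat f(x^t)-\hat f(x^{t|\theta}) \ge \tfrac{1}{32\log\log T}\, f(L_\theta(x^t))$, in which case one buys that set and charges it against the drop in $\hat f(x^t)$; or else $f(L_1(x^t)) \le \hat f(x^t)/\log T$, in which case buying $L_1(x^t)$ is essentially free when amortized over the $\log T$ outer iterations. This dichotomy is the technical heart of the argument and replaces what you call ``the main obstacle''; your doubly-exponential threshold schedule and the accompanying charging claim (constant loss per phase via positive homogeneity) are not substantiated, and it is unclear how re-solving the LP after committing high-value pairs would avoid the usual $\log$-factor loss from independent rounding across days.

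Finally, your reduction from $T$ to $\min(N,T)$ by ``compressing demand-free subintervals'' is not the paper's route and is too sketchy as stated: the paper bounds $T$ by $N^2$ through a careful bucketing of items by singleton cost combined with a sparsity lemma for the LP, again relying on the left-aligned covering formulation rather than the original holding-cost LP.
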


We also mention the works on submodular partitioning problems~\cite{chekuri2011approximation,chekuri2011submodular,ene2013local}. 
In these problems, a ground set $V$ must be partitioned across $k$ different sets to minimize a submodular cost function. They use rounding of a relaxation based on the \lovaszextension{} to unify and improve several prior results. Their approach inspired our use of the \lovaszextension{} in the rounding algorithm for \sjrp.

\section{Preliminaries, model and technical overview}\label{sec:preliminaries}

We use $\log$ for the base $2$ logarithm. 
We write $[k]$ for $\{1,\dots,k\}$, and $[k,\ell]$ for $\{k, k+1, \ldots, \ell \}$, for any integers $k \leq \ell$.
\ipcoskip{The support $\supp(w)$ of a function $w : X \to \mathbb{R}$ is the subset of $X$ for which $w$ maps to nonzero values. }

The general framework of the inventory problems we investigate is defined by a set of items 
$V$ of size $N$, 
ordering cost function $f: 2^{V}\to \Rnn$ and a time horizon $[T]  =\{ 1,\dots,T\}$.  
We will assume throughout this paper that $f$ is monotone and subadditive, with $f(\emptyset) = 0$.
We will typically refer to the atomic time units as \emph{days}. %

For each item $v \in V$ and day $t$, there is a demand $d_{vt}\geq 0$. 
The collection of item-day pairs for which there is positive demand is denoted $D := \{(v,t): d_{vt} >0\}$. 
Demand for day $t$ can be satisfied on or before day $t$.
If we satisfy demand for item $i$ on day $t$ using an order on some day $s<t$, we need to store the items in the intervening days, and we pay a holding cost of $h_{st}^v$ per unit we store. %
The magnitude of the demand only plays a role in the holding cost; the ordering cost is determined by the unique items ordered and is independent of how many units are ordered.

Given these inputs, we need to place an order for items to be delivered on each day so as to minimize the total ordering cost plus the holding cost. Since the cost of delivering an item does not depend on the size of the order and we want to store items as briefly as possible, it is always optimal to deliver just enough units of an item to satisfy demand until the next order for that item is scheduled. 
Hence, once we decide which items to order on which days, the optimal schedule is completely determined. 

The inventory routing problem is the special case where we have a metric on $V$, some distinguished root node $r\in V$, and the ordering cost $f(S)$ of a set $S\subseteq V$ is the minimum possible length of 
a tree containing $S \cup \{r\}$.
Here we differ from the usual definition, where $f(S)$ is defined to be the length of a shortest tour on $S \cup \{r\}$;
but as is well known, these two definitions differ only by a factor of at most 2, which will not concern us.
The submodular joint replenishment problem is the special case where $f$ is submodular (in addition to the required properties already listed).

An integer programming formulation for this problem is given in~\eqref{ilp:subadditive}. 
Here the variable $y_t^S$ indicates whether item set $S$ is ordered on day $t$, and $x_{st}^i$ indicates whether the demand for item $v$ on day $t$ is satisfied by an order on day $s$. 
\begin{equation}
  \label{ilp:subadditive}
  \begin{aligned}
    \text{minimize }\quad & \ssum_{t\in [T]} \ssum_{S \subseteq V} f(S) y_t^S + \ssum_{t\in [T]} \sum_{v\in V}  \ssum_{s \leq t} d_{vt}h^v_{st} x_{st}^v  \\
    \text{subject to } \quad & x_{st}^v \leq \ssum_{S : v \in S} y_s^S \qquad \forall (v,t) \in D, s\leq t \\
    & \ssum_{s \leq t} x_{st}^v = 1 \qquad \forall (v,t) \in D  \\
    & y_t^S, x_{st}^v \in \{0,1\}\qquad  \forall v \in V,s \leq t, S \subseteq V
  \end{aligned}
\end{equation}
Let \relaxation{} denote the LP relaxation obtained by replacing the integrality constraints of ILP~\eqref{ilp:subadditive} by nonnegativity constraints; this LP has an exponential number of variables. 
To efficiently solve \relaxation{}, it suffices to provide an efficient separation oracle for the dual.
This can be done for both \sjrp{} and \irp{} (in the latter case, in an approximate sense);
see~\cite{nagarajan2016approximation}.

Nagarajan and Shi~\cite{nagarajan2016approximation} show that in order to round \relaxation{}, it suffices to round an associated covering problem.
Essentially, given a solution $(\hat{x},\hat{y})$ to \relaxation{}, we require that each demand $(v,t) \in D$ is served within an interval $[s'(v,t), t]$, where $s'(v,t)$ is the median of the distribution $(\hat{x}^v_{st})_{s \leq t}$.
Serving $(v,t)$ anywhere within this interval will incur cost at most twice what the fractional solution pays;
and moreover, they show that enforcing this restriction cannot make the optimal solution more than a constant factor more expensive.
The holding costs can then be dropped from the objective function.
All in all, we obtain an instance of the following \emph{\saddc} problem:
for each item $v \in V$ we are given an associated set of \emph{demand windows} $\dW_v \subseteq \{ [s,t] : s \leq t \in [T]\}$.
We must choose a subset $S_t \subseteq V$ for each day $t\in [T]$ such that every item $v \in V$ is covered in each of its demand windows---that is, $v \in S_r$ for some $r \in [s,t]$, for each $[s,t] \in \dW_v$.
The goal is to find a feasible solution minimizing the total cost $\sum_{t\in [T]} f(S_t)$. 

We also associate the canonical LP~\eqref{lp:saddc} with the \saddc{} problem. 
\newcommand{\lpeqssaddc}{
  \begin{aligned}
    \text{minimize }\quad & \ssum_{t\in [T]} \ssum_{S \subseteq V} f(S) y_t^S  \\
    \text{subject to } \quad & \ssum_{r \in [s,t]} \ssum_{S : v \in S} y_r^S \geq 1 \qquad \forall [s,t] \in \dW_v, \quad \forall v\in V  \\
    & y_t^S\geq 0 \qquad  \forall  t \in [T], S \subseteq V
  \end{aligned}
}
\begin{equation}
  \label{lp:saddc}
\lpeqssaddc{}
\end{equation}
Our goal, given a fractional solution to this LP, is to round it to an integral solution.
Note that the instance of \saddc{} is constructed from a solution $(\hat{x}, \hat{y})$ to \relaxation{} in such a way that $\hat{y}$ is already a feasible fractional solution to \eqref{lp:saddc}.

We now come to our first contribution.
We show that this reduction can be taken much further: 
we can reduce to covering problems where the set of intervals have a very special structure.
This structure, which we call \emph{left aligned}, shares many of the benefits of a laminar family.
For example, they have a natural notion of depth, which is always logarithmically bounded by $T$;
the approximation factors of our algorithms are essentially logarithmic in this depth.
We describe this reduction, which is rather general and applies identically to both \sjrp{} and \irp, in \Cref{sec:reduction}.
We also show, again generally, how the time horizon $T$ can be polynomially bounded in terms of the number of items $N$.

So to obtain our main theorems, it suffices to give $O(\log \log T)$-factor approximation algorithms for these well-structured covering variants of \sjrp{} and \irp.
Here the approaches diverge; we give rather different algorithms for these two problems, albeit based on iterative rounding~\cite{jain2001factor} in both cases.

For \irp, the algorithm uses randomized iterative rounding~\cite{byrka_2013_steiner} of a certain path-based relaxation. 
We can show that after sampling every path in the support of the relaxation $O(\log \log T)$ times, we can remove a constant fraction of the edges and reorder the remaining paths such that we retain a feasible solution.  
Details are given in \Cref{sec:irp}.

For \sjrp, by contrast, the iterative rounding approach is naturally deterministic in nature. 
Instead of randomly rounding item sets in the support of a relaxation, we carefully try to pick a set for each day such that we win a constant fraction of its cost back in the subsequent reduction of the cost of the relaxation. 
If such a set cannot be found, we show that we can shrink the time horizon $T$ by merging some adjacent time units
(or put differently, we are able to remove the bottom ``leaf'' layer of the left-aligned family); we can then recurse on the smaller instance.
We discuss this further in \Cref{sec:sjrp}.

\section{Reducing to structured covering problems}
\label{sec:reduction}
The results of this section will not use any properties of the ordering cost function $f$ that differ between IRP and SJRP.
All that we will need, other than the general properties of $f$ assumed at the start of \Cref{sec:preliminaries}, 
is that we are given an (approximately) optimal solution to the LP relaxation of~\eqref{ilp:subadditive}.

  Let $\mathcal{D} = \{[k2^i+1, (k+1)2^i] : i, k \in \znn \}$ denote the family of dyadic intervals over the nonnegative integers;
  the value of $i$ for one of these intervals in $\mathcal{D}$ we call the \emph{level} of that interval.
\begin{definition}
  \label{def:dyadaligned}

  A family of intervals $\mathcal{F}  \subseteq \{ [s,t]: s,t \in \mathbb{Z}_{\geq 0}\}$ is called 
  \begin{itemize}
    \item \emph{left aligned} if for all $[s,t]\in \mathcal{F}$ there exists $[s,t'] \in \mathcal{D}$ with $t'\geq t$, 
    \item \emph{right aligned} if for all $[s,t] \in \mathcal{F}$ there exists $[s',t]\in \mathcal{D}$ with $s'\leq s$. 
    \end{itemize}
The \emph{level} of an interval $[s,t] \in \mathcal{F}$ is the level of the minimal interval of $\mathcal{D}$ containing $[s,t]$.
\end{definition}
\ipcoskip{Observe that if $\mathcal{F}$ is both left and right aligned, then it is a laminar family.}
\fullv{
An example of a left-aligned family is shown in Figure~\ref{fig:aligned}.

\begin{figure}[htb]
  \centering
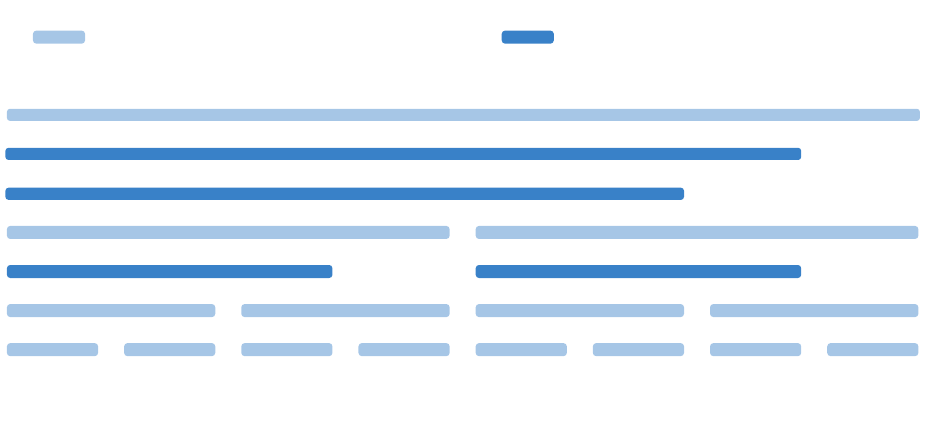
\caption[Example of laminar and left-aligned families.]{Example of laminar and left-aligned families of intervals on $\{1,\dots,8\}$. 
  \label{fig:aligned}
}
\end{figure}
}

We will call an instance of \saddc{} \emph{left (right) aligned} if 
$\bigcup_{v \in V} \dW_v$ is left (right) aligned.

\begin{theorem}
  \label{thm:leftaligned}
  At the loss of a constant factor, we can reduce an instance of the \saddc{} problem to a pair of instances, one left aligned and the other right aligned. 
  \ipcoskip{\footnote{See Figure~\ref{fig:alignedreduction} for an illustration.}}
\end{theorem}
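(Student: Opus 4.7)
The plan is to split each demand window into a right-aligned initial segment and a left-aligned final segment via a dyadic partition, then route each piece to one of two sub-instances using the fractional LP solution in hand. The main technical delicacy is the geometric bookkeeping: verifying that the minimal dyadic interval containing $[s,t]$ always has a ``midpoint'' strictly inside $[s,t]$ that induces the correct alignment on both halves. The rest --- LP splitting and subadditive combining --- is then routine.

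First, I would fix the geometric decomposition. Given a window $[s,t]$, let $[a,b] \in \mathcal{D}$ be the minimal dyadic interval containing it, of level $\ell$; if $[s,t]$ itself lies in $\mathcal{D}$ it is simultaneously left- and right-aligned and may be assigned arbitrarily, so assume $\ell \geq 1$. Set $m := a + 2^{\ell-1} - 1$. The two level-$(\ell-1)$ dyadic children of $[a,b]$ are $[a, m]$ and $[m+1, b]$; by minimality of $[a,b]$, the window $[s,t]$ is contained in neither child, whence $s \leq m < t$. Then $[s,m] \subseteq [a,m] \in \mathcal{D}$ witnesses that $[s,m]$ is right-aligned, and symmetrically $[m+1,t] \subseteq [m+1,b] \in \mathcal{D}$ is left-aligned.

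Next, I would use the given fractional solution $\hat{y}$ to \relaxation{} to decide where each window lands. For each $[s,t] \in \dW_v$, the LP constraint gives $A + B \geq 1$ where $A := \sum_{r \in [s,m]} \sum_{S \ni v} \hat{y}_r^S$ and $B := \sum_{r \in [m+1,t]} \sum_{S \ni v} \hat{y}_r^S$, so at least one of $A, B$ is $\geq 1/2$. If $A \geq 1/2$, place $[s,m]$ into the right-aligned sub-instance's window set for $v$; otherwise place $[m+1,t]$ into the left-aligned one. By construction $2\hat{y}$ is then simultaneously LP-feasible for both sub-instances, so each sub-instance's LP optimum is at most twice that of the original.

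Finally, the combining step: given any feasible integral solutions $(S^L_t)_{t \in [T]}$ and $(S^R_t)_{t \in [T]}$ to the left- and right-aligned sub-instances, the daily union $S_t := S^L_t \cup S^R_t$ is feasible for the original, since each original window $[s,t]$ contains the sub-window to which it was routed. By subadditivity of $f$, $f(S_t) \leq f(S^L_t) + f(S^R_t)$, so an $\alpha$-approximation algorithm on each aligned sub-instance yields an $O(\alpha)$-approximation on the original---the promised constant-factor loss.
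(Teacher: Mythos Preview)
Your proposal is correct and follows essentially the same approach as the paper: split each window at a dyadic breakpoint into a right-aligned prefix and a left-aligned suffix, route each window to one of the two sub-instances according to which half carries at least $\tfrac12$ of the LP mass, observe that $2\hat y$ is feasible for both sub-instances, and recombine integral solutions via subadditivity of $f$.

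The only (inessential) difference is the choice of split point. The paper takes the point $k2^i \in [s,t]$ with $i$ maximal, which sometimes lands at an endpoint and makes one of the two pieces empty (e.g.\ $[2,8]$ stays intact as a right-aligned window). You instead take the midpoint of the minimal containing dyadic interval, which always produces two nonempty pieces whenever $[s,t]\notin\mathcal D$. Both choices yield valid right-/left-aligned pieces, and the LP and recombination arguments go through identically; your more explicit verification that $s\le m<t$ via minimality of the containing dyadic interval is a clean way to handle the geometry. One small notational slip: the solution you should be splitting is a feasible $y$ for \eqref{lp:saddc} (the \saddc{} LP), not $\hat y$ for \relaxation{}; but since the paper has already arranged that $\hat y$ is feasible for \eqref{lp:saddc}, this is harmless.
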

\ipcoprooflater{reduction}{thm:leftaligned}{The proof is given in the appendix.}
    {
      Let $y$ be a solution to \eqref{lp:saddc}. 
      We will first generate two new instances of the \saddc{} problem, one being left aligned and the other right aligned. 

      Given an interval $[s,t]$, define the \emph{right-aligned part} $R([s,t])$ and the \emph{left-aligned part} $L([s,t])$ by
  \[ R([s,t]) = [s, k2^i] \quad \text{ and  } \quad L([s,t]) = [k2^i+1,t] ,\]
  where $i,k$ are integers such that $k2^i\in[s,t]$ and $i$ is maximal. If $k2^i = t$, then $L([s,t]) = \emptyset$, and if $k2^i + 1=s$ then $R([s,t]) =\emptyset$ by convention.
  It is clear from this definition that $\{ L([s,t]): v \in V, [s,t] \in \dW_v\}$ forms a left-aligned family, and similarly the right-aligned parts form a right-aligned family. 

  Any LP solution must cover every item by at least half in either the right-aligned or left-aligned part of its demand window. 
      For each $v\in V$ and demand window $[s,t]\in \dW_v$, if $L([s,t])$ receives half a unit of coverage under $y$, 
      add $L([s,t])$ as a time window for $v$ in the left-aligned instance; otherwise put $R([s,t])$ in the right-aligned instance.

      It is immediate from the way in which we constructed the two instances that $2y$ is a feasible solution to each.
    Hence the combined cost of the optimal solutions to the LP relaxations of the generated instances is at most 4 times that of the original instance. 
    Furthermore, we can translate integral solutions to the left and right
    aligned instances back to one for the original instance by adding them together, which does not increase the cost by subadditivity of $f$. 
}
Right-aligned instances can be handled identically to left-aligned ones, so we consider only left-aligned instances in the sequel.

\ipcoskip
{
\begin{figure}[htb]
  \centering
\def\svgwidth{0.95\textwidth}
\begingroup%
  \makeatletter%
  \providecommand\color[2][]{%
    \errmessage{(Inkscape) Color is used for the text in Inkscape, but the package 'color.sty' is not loaded}%
    \renewcommand\color[2][]{}%
  }%
  \providecommand\transparent[1]{%
    \errmessage{(Inkscape) Transparency is used (non-zero) for the text in Inkscape, but the package 'transparent.sty' is not loaded}%
    \renewcommand\transparent[1]{}%
  }%
  \providecommand\rotatebox[2]{#2}%
  \newcommand*\fsize{\dimexpr\f@size pt\relax}%
  \newcommand*\lineheight[1]{\fontsize{\fsize}{#1\fsize}\selectfont}%
  \ifx\svgwidth\undefined%
    \setlength{\unitlength}{474.76780303bp}%
    \ifx\svgscale\undefined%
      \relax%
    \else%
      \setlength{\unitlength}{\unitlength * \real{\svgscale}}%
    \fi%
  \else%
    \setlength{\unitlength}{\svgwidth}%
  \fi%
  \global\let\svgwidth\undefined%
  \global\let\svgscale\undefined%
  \makeatother%
  \begin{picture}(1,0.32230477)%
    \lineheight{1}%
    \setlength\tabcolsep{0pt}%
    \put(0.04389825,0.00826472){\color[rgb]{0,0,0}\makebox(0,0)[lt]{\lineheight{1.25}\smash{\begin{tabular}[t]{l}original time windows $\mathcal{W}_v$\end{tabular}}}}%
    \put(0.74713577,0.00844492){\color[rgb]{0,0,0}\makebox(0,0)[t]{\lineheight{1.25}\smash{\begin{tabular}[t]{c}split into left/right aligned families\end{tabular}}}}%
    \put(0,0){\includegraphics[width=\unitlength,page=1]{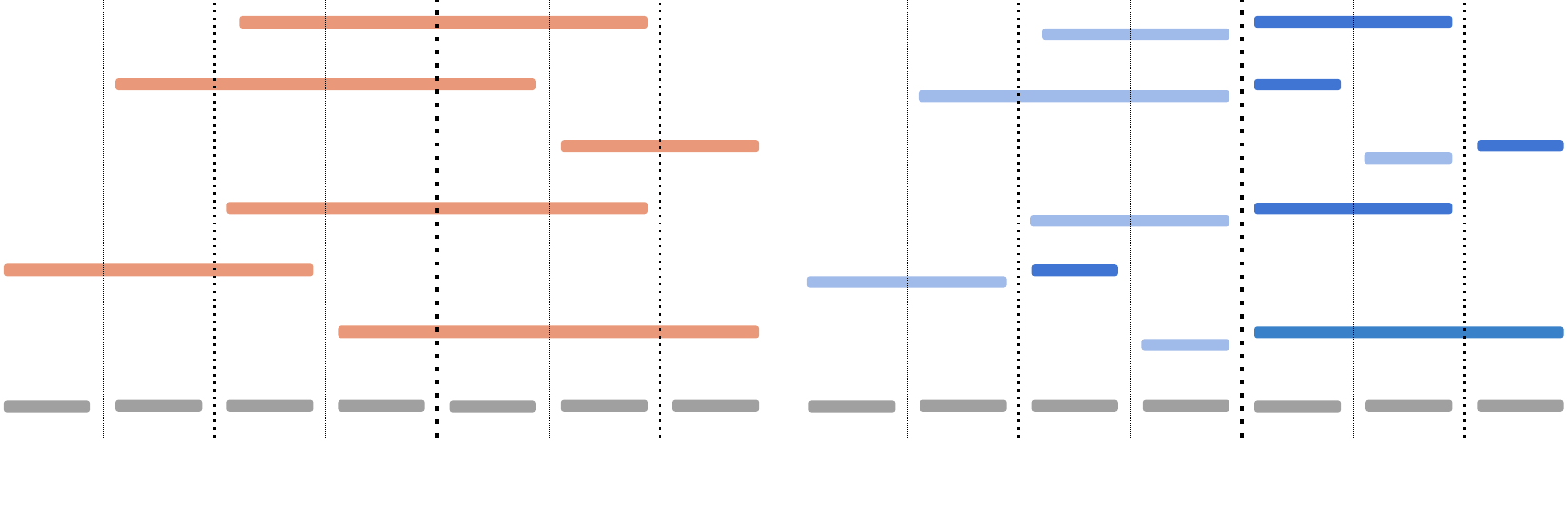}}%
  \end{picture}%
\endgroup%

\caption[Reducing instance to left and right aligned families.]{On the left an arbitrary collections of intervals is shown. On the right the intervals are split, resulting in a left-aligned (light blue) and right-aligned (dark blue) family of intervals.}
  \label{fig:alignedreduction}
\end{figure}
}

\ifipco{
\subsection{Bounding the time horizon, and further simplifications}\label{sec:timebound}
}{
\subsection{Bounding the time horizon}\label{sec:timebound}
}
\ifipco{Due to space constraints, we defer the proof of the following to the full version.}
    {Here we will show how to reduce the problem to the case where the time horizon is bounded by the size of the item set $N$. This allows us to focus on proving approximation ratios in terms of $T$, as they carry over to approximation ratios in terms of $N$ without further adjustments.}
 \begin{theorem}
   \label{thm:timebound}
   At the loss of a constant factor we can reduce a left-aligned \saddc{} problem to a polynomial-sized collection of left-aligned \saddc{} problems with time horizons equal to $N^2$. 
 \end{theorem}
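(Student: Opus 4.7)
The plan is to partition $[T]$ into consecutive blocks of size $\tau := N^2$, construct one left-aligned \saddc{} sub-instance per block, and route the demand windows that span block boundaries into a single chosen block via an LP-guided assignment. After padding $T$ so that $\tau$ divides $T$ and $\tau$ is a power of two (a constant-factor blow-up), set $B_k := [(k-1)\tau+1, k\tau]$ for $k = 1, \dots, T/\tau$. Each $B_k$ is itself a dyadic interval of level $\log \tau$, so re-indexing $B_k$ to $[1, \tau]$ preserves the internal dyadic structure and hence the left-alignment of every window lying inside $B_k$.

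Classify each $[s,t] \in \dW_v$ as \emph{short} (contained in some $B_k$) or \emph{long} (spanning a block boundary). Each short window is re-indexed and inserted into the sub-instance $I_k$ for its containing block. For each long window, I would pick the block $k^\star$ intersecting $[s,t]$ that carries the largest LP coverage of $v$, i.e.\ maximizes $\sum_{r \in B_k \cap [s,t]} \sum_{S \ni v} y_r^S$, and inject the re-indexed sub-window $B_{k^\star}\cap[s,t]$ for $v$ into $I_{k^\star}$. This injected window is left-aligned: its left endpoint is either the start of $B_{k^\star}$ or an inherited dyadic boundary from $s$. Concatenating integer solutions $Z_k$ to the $I_k$'s yields an integer solution to the original instance at total cost $\sum_k \mathrm{cost}(Z_k)$, because every short window is covered inside its own block and every long window is covered by the order forced inside its assigned block. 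The collection of sub-instances has size $T/\tau$, hence polynomial.

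The remaining task---and the main obstacle---is bounding $\sum_k \mathrm{OPT}(I_k) = O(\mathrm{OPT})$. The natural route is to exhibit a feasible fractional solution to each $I_k$: restrict $y$ to $B_k$ to cover the short-window obligations, and scale up suitably to satisfy the injected long-window obligations. The assignment rule ensures that if a long window has dyadic level $\log\tau + i$ (equivalently, spans $O(2^i)$ blocks), the chosen block retains an $\Omega(1/2^i)$-fraction of the LP coverage of $v$ on $[s,t]$, so scaling the block-restricted LP by $O(2^i)$ is feasible for that obligation. A naive sum over long windows loses a factor equal to the number of blocks spanned; the fix is to charge level by level, using that the dyadic intervals of level $\log\tau + i$ tile $[T]$ and that, within any one such interval, the LP simultaneously covers every item with a long window of that level by monotonicity and subadditivity (or, for \sjrp, submodularity via the \lovaszextension{}) of $f$. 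The per-level charge is then a constant fraction of the LP mass inside the interval, and the number of new long windows contributed at higher levels decays geometrically, yielding an $O(\mathrm{LP}(y))$ total. The delicate point is exactly this geometric bookkeeping in the left-aligned dyadic hierarchy, ensuring that scaling and summing across levels does not introduce a $\log(T/\tau)$ factor.
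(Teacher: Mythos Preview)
Your approach is genuinely different from the paper's, and the gap you yourself flag---the cost accounting for long windows---is real and, as stated, does not close. The problem is that your assignment rule is \emph{per item}, while the LP cost is \emph{shared across items}; routing items with the same long window to different blocks can destroy that sharing. Concretely: take $N$ items, all with window $[1,T]$, $T\gg N^2$, and an LP that places $y_t^V=1/T$ on every day, plus a tiny $\epsilon$-weight singleton $y_{t_i}^{\{v_i\}}$ for each $v_i$ at a day $t_i$ lying in a distinct block. The max-coverage rule then assigns the $v_i$'s to $N$ different blocks. The optimum of each resulting sub-instance is $f(\{v_i\})$, so $\sum_k \mathrm{OPT}(I_k)=\sum_i f(\{v_i\})$, which for, say, $f(S)=1$ on nonempty $S$ equals $N$, whereas the original $\mathrm{OPT}$ and LP are $1+O(\epsilon)$. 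No level-by-level bookkeeping rescues this: the loss comes from splitting a single shared set across blocks, not from scaling factors stacking across levels. Your claimed ``geometric decay of long windows at higher levels'' is not a property of the instance---all windows can sit at the top level.

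The paper sidesteps this entirely with a two-step preprocessing that never assigns items individually. First, it partitions $V$ (not time) into groups $V_1,V_2,\ldots$ so that within each group $\min_v f(v)\ge \tfrac1N\max_v f(v)$; the overhead is a constant because on any day the groups beyond the top two together cost at most the top one, by subadditivity. Second, it sparsifies the LP so that each day has either zero mass or $\sum_S y_t^S\ge 1$ (shifting fractional mass to interval endpoints, a factor~$2$), and deletes the zero days. Now every surviving day carries LP cost at least $\min_v f(v)$, while $f(V)\le N\max_v f(v)\le N^2\min_v f(v)$; hence ordering all of $V$ once per $N^2$ days is paid for by the LP itself, and this single ``reset'' at each block boundary makes the blocks independent---no per-item routing, no sharing lost. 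If you want to salvage your block decomposition, the missing ingredient is exactly such a global reset (cover $V$ at block boundaries), and to afford it you need the bounded-ratio preprocessing on singleton costs first.
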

 This theorem could already be used to improve the dependence of the Nagarajan-Shi algorithm from $O\left(\tfrac{\log T}{\log \log T}\right)$ to $O\left(\tfrac{\log \min\{T,N\}}{\log\log \min\{T,N\}}\right)$.
\ifipco{
 It also allows us to make the simplifying assumption that each item has positive demand on exactly one day, by 
 making a copy of an item for each day in which it has a positive demand (with $T \leq N^2$, this only increases the number of items by a polynomial factor). Finally, we also assume that 
 $T = 2^{2^k}$ for some $k \in \mathbb{N}$; if not, simply round up.
 Call an instance with all these properties (including being left aligned) \emph{nice}; 
 we will assume throughout the remainder of the paper that we are working with a nice instance.
}

\fullv{
    The remainder of this subsection is devoted to proving this theorem.
We will sometimes write $f(v)$ as shorthand for $f(\{v\})$ in what follows.

First, we show that the cost of singleton items are not too far apart.
\begin{lemma}
  \label{lem:wellseparated}
  At the loss of a constant factor, we can reduce the left-aligned \saddc{} problem to a collection of left-aligned \saddc{} problems with $\min_{v\in V} f(v)\geq \frac1{N} \max_{v\in V} f(v)$. 
\end{lemma}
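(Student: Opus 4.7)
My plan is to partition $V$ by singleton cost into geometric buckets of ratio $N$, yielding one left-aligned \saddc{} subinstance per bucket. Writing $L := \max_{v \in V} f(v)$ and
\[
  V_j := \{\, v \in V : L/N^{j+1} \leq f(v) < L/N^j \,\}, \qquad j = 0, 1, 2, \ldots,
\]
each $V_j$ together with the demand windows of its items and $f$ restricted to $2^{V_j}$ defines a left-aligned \saddc{} subinstance $I_j$ that by construction satisfies $\min_{v \in V_j} f(v) \geq \max_{v \in V_j} f(v) / N$, as required. The restriction of $f$ stays monotone, subadditive, and zero on the empty set, so $I_j$ is a legitimate \saddc{} instance, and left-alignment is inherited since $\mathcal{W}_v$ is unchanged.

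For the cost accounting, monotonicity of $f$ gives $\mathrm{OPT}(I_j) \leq \mathrm{OPT}(I)$: restrict any feasible schedule $(S_t)$ for $I$ to $(S_t \cap V_j)$, which is feasible for $I_j$ and costs at most $\sum_t f(S_t \cap V_j) \leq \sum_t f(S_t)$. Conversely, given per-bucket schedules $(S_t^j)$, the day-by-day union $S_t := \bigcup_j S_t^j$ is feasible for $I$, and subadditivity of $f$ gives total cost $\sum_t f(S_t) \leq \sum_t \sum_j f(S_t^j) = \sum_j \mathrm{cost}(I_j)$. So solving each bucket separately and unioning returns a schedule for $I$ of total cost at most $K \cdot \mathrm{OPT}(I)$, where $K$ is the number of nonempty buckets.

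The main obstacle, and where I expect the real work to happen, is reducing to $K = O(1)$: a priori $K$ can grow like $\log_N(L/\min_v f(v))$, which is unbounded in the input. My plan is to truncate the bucketing at $j = c$ for a small absolute constant $c$, and handle items in $V_{\geq c} := \bigcup_{j \geq c} V_j$ separately by piggybacking on the $V_0$-schedule. The key facts enabling the piggyback are that $\mathrm{OPT}(I) \geq L$ (since $v_{\max}$ must be ordered in at least one of its demand windows, at cost at least $L$ by monotonicity), that every $V_0$-order costs $\Omega(L/N)$ (so a near-optimal $V_0$-schedule uses at most $O(\mathrm{OPT}(I) \cdot N / L)$ order days), and that $\sum_{v \in V_{\geq c}} f(v) \leq N \cdot L/N^c = L/N^{c-1}$. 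Adding all of $V_{\geq c}$ to every $V_0$-order day contributes only $O(\mathrm{OPT}(I)/N^{c-2})$ to the cost by subadditivity, a small constant fraction of $\mathrm{OPT}(I)$ for $c \geq 3$.

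The technically delicate piece I anticipate is handling demand windows of items in $V_{\geq c}$ that contain no $V_0$-order day: such orphaned windows must be served by additional orders, and bounding their total cost uniformly is where the left-aligned structure has to be exploited carefully---for example, by grouping orphaned windows by their dyadic level and charging at most one additional all-of-$V_{\geq c}$ order per level, each costing $L/N^{c-1}$, so the total overhead remains $O(\mathrm{OPT}(I))$. Once this is done, the entire reduction produces $c+1$ subinstances (the buckets $I_0, \ldots, I_{c-1}$ plus an auxiliary instance for the orphaned windows), each satisfying the desired ratio bound, with combined cost $O(\mathrm{OPT}(I))$.
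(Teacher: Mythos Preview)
Your bucketing by singleton cost is the right starting point, and essentially what the paper does as well. The real gap is not in the orphaned-window argument (though that part is indeed broken, see below) but earlier: you assume that the loss factor is the \emph{number} $K$ of nonempty buckets, and therefore set out to force $K=O(1)$. That is not what is needed. The statement only asks that the \emph{collection} of subinstances can be solved at constant total loss, i.e.\ that $\sum_j \mathrm{OPT}(I_j)\le O(1)\cdot\mathrm{OPT}(I)$, and this holds for your geometric bucketing with \emph{all} buckets kept. Fix an optimal schedule $(S_t)_t$ for $I$ and set $S_t^j:=S_t\cap V_j$. For each day $t$ let $j=j(t)$ be the smallest index with $S_t^j\neq\emptyset$. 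Monotonicity gives $f(S_t^j)\ge L/N^{j+1}$, while every item in $\bigcup_{k\ge j+2}V_k$ has singleton cost below $L/N^{j+2}$; since there are at most $N$ such items in $S_t$, subadditivity yields
\[
\sum_{k\ge j+2} f(S_t^k)\ \le\ \sum_{v\in S_t\cap\bigcup_{k\ge j+2}V_k} f(v)\ \le\ N\cdot \frac{L}{N^{j+2}}\ =\ \frac{L}{N^{j+1}}\ \le\ f(S_t^j).
\]
Hence $\sum_k f(S_t^k)\le 2f(S_t^j)+f(S_t^{j+1})\le 3f(S_t)$, and summing over $t$ gives $\sum_j\mathrm{OPT}(I_j)\le 3\,\mathrm{OPT}(I)$. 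The number of nonempty buckets is at most $N$ (each contains an item), so the collection is polynomial. This is exactly the paper's argument; the only difference is that the paper uses an adaptive cutoff $\mu_j/|V_j|$ instead of your fixed $L/N^j$, but either works.

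As for the orphaned-window patch you sketch: it does not go through. ``At most one additional all-of-$V_{\ge c}$ order per dyadic level'' is false---at any fixed level there can be up to $T/2^i$ disjoint orphaned windows---and even one order per level already gives $\Theta(\log T)$ extra orders, which is not $O(1)$. More fundamentally, an item in $V_{\ge c}$ may have arbitrarily many demand windows (the single-window assumption is only introduced \emph{after} this lemma), so neither ``serve each item once'' nor ``piggyback on the $V_0$ days'' controls the cost. Fortunately, none of this machinery is needed once you drop the requirement $K=O(1)$.
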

\begin{proof}
    We will iteratively split the item set $V$ into a sequence of sets $V_1, V_2,\dots$ with items of descending singleton cost $f(v)$. 
    Initially, let $V_1 = V$. Then, for $j=1,\ldots, N$ and while $|V_j|>0$, let $\mu_j =\max_{v\in V_{j}} f(v)$. Remove all items $v$ with $f(v) < \mu_j/|V_j|$ from $V_j$ and add them to $V_{j+1}$.

  Now each set $V_j$ induces an instance of the required form. We claim that if we treat all these instances separately and take the union of their schedules the resulting schedules cost at most three times the original optimum.  
  The following construction gives a set of schedules that satisfies this bound. 

  Take an optimal solution $S_1,\dots,S_T$ to the original instance and generate a schedule for each instance $V_j$ by selecting item set $S_t^j := S_t \cap V_j$ on day $t \in [T]$. 

  On each day $t$, the total cost of the resulting schedule is dominated by the cost of two item sets $S_t^j, S_t^{j+1}$ where $j$ is the smallest index for which $S_t^j$ is nonempty. To be precise, it holds that:  
  \[    \sum_{k \geq j+2} f( S_t^k)\leq
      \sum_{v\in \bigcup_{k \geq j+2 }S_t^{k}} f(v) \leq 
\sum_{v\in \bigcup_{k \geq j+2 }S_{t}^{k}}  \mu_{j+1} / \Bigl|\bigcup_{k \geq j+1} V_k\Bigr| \leq
\mu_{j+1}
.\]

Here, in the first inequality we use subadditivity of $f$. For the second, we use the definition of the cutoff value for items that are removed from $V_{j+1}$ and put in $V_{j+2}$. In the last inequality, we use that $\bigcup_{k \geq j+2 }S_{t}^{k} \subseteq \bigcup_{k \geq j+1} V_k$ and therefore $|\bigcup_{k \geq j+2 }S_{t}^{k} |\leq |\bigcup_{k \geq j+1} V_k|$.  

Then, using $\mu_j = \max_{v \in V_j} f(v)$, $\max_{v\in V_{j+1}} f(v) \leq \min_{v\in V_j} f(v)$, and $S_j^t \subseteq V_j$, we get that:
\[\sum_{k \geq j+2} f( S_t^k)\leq \mu_{j+1} \leq  \min_{v\in V_j} f_v \leq f(S_j^t) . \]
In particular, the cost of the union of the schedules is at most  
\[f(\bigcup_{k \geq j }S_t^{k}) \leq
\sum_{k\geq j} f(S_t^k) \leq f(S_t^j) + f(S_t^{j+1}) + f(S_t^j) \leq 3f(S_t), \]
as required. 
Finally, we note we did not touch the demand windows and therefore the instances generated are also left aligned. This finishes the proof.
\end{proof}

We also need the following sparsity result on near-optimal solutions to LP relaxation~\eqref{lp:saddc} of the \saddc{} problem. It allows us to assume that every day, the fractional coverage of the sets in the LP solution adds up to at least a full unit, or else it covers no items at all. The intuition for this is that if it is not the case, we can group time intervals with less than a unit coverage, and copy all coverage inside those intervals to the endpoints.

 \begin{lemma}
   \label{lem:sparsity}
   Take a solution $y$ to \eqref{lp:saddc}. Then there exists a solution $\bar{y}$ whose cost is within twice the cost of $y$, 
   so that for each day $t \in [T]$, 
   \[  \sum_{S\subseteq V} \bar{y}_t^S \geq 1 \quad \text{ or  } \quad \sum_{S\subseteq V} \bar{y}_t^S  = 0.\] 
 \end{lemma}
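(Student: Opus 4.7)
Let $z_t = \sum_{S \subseteq V} y_t^S$ and call day $t$ \emph{light} if $0 < z_t < 1$. The plan is to transform $y$ into $\bar y$ so that no day remains light, by redistributing mass only among light days and their adjacent heavy days, while at most doubling the total cost. The decisive LP-feasibility fact I will exploit is that for every demand window $[s,t] \in \dW_v$ one has $\sum_{r \in [s,t]} z_r \geq \sum_{r \in [s,t]} \sum_{S : v \in S} y_r^S \geq 1$; hence no interval whose $z$-mass totals to less than $1$ can contain an entire demand window.

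Concretely, I would run a left-to-right greedy pass over $[T]$: open a new group at day $1$ and at every heavy day, and otherwise close the current group as soon as its cumulative $z$-mass reaches $1$. The resulting intervals $G_1,\dots,G_k$ are of three types: a single heavy day (left untouched); a run of light days whose total mass lies in $[1,2)$ (a \emph{full} group); or a leftover tail/head run of light days whose total mass is $< 1$ (a \emph{partial} group). For each full group $G_i = [a_i,b_i]$ I would set $\bar y_{a_i}^S = \bar y_{b_i}^S = \sum_{t \in G_i} y_t^S$ and $\bar y_t^S = 0$ for $t \in G_i \setminus \{a_i,b_i\}$, making both $\bar z_{a_i}$ and $\bar z_{b_i}$ at least $1$. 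Mass from a partial group would be copied onto the keeper day of the adjacent full group on each side. Each variable $y_t^S$ is thereby counted in $\bar y$ at most twice, so the ordering cost at most doubles by subadditivity, and after this step $\bar z_t$ is always either $0$ or $\geq 1$.

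For feasibility, fix a demand window $[s,t]$ of item $v$; we must show $\sum_{r \in [s,t]} \bar a_r^v \geq 1$. If $[s,t]$ crosses any group boundary then at least one keeper day $a_i$ or $b_i$ lies in $[s,t]$; since such a keeper carries the entire mass of its group on every item, the full $v$-coverage of $G_i$ contributes to the window, so the new coverage is at least as large as the original. The one delicate case is a window $[s,t]$ strictly contained inside a full group $G_i$ with $a_i < s$ and $t < b_i$, where both keepers lie outside $[s,t]$ and coverage is lost. This is the main obstacle, and I expect to resolve it by refining the greedy pass to also close the current group at the right endpoint of any demand window falling inside it; by the key observation this can only happen inside a full (mass-$\geq 1$) group, so forcing an early close simply produces an additional partial group that will be handled by the adjacent-keeper rule. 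After this refinement every demand window has an endpoint at (or across) a group boundary and thus sees a keeper day, giving $\sum_{r \in [s,t]} \bar a_r^v \geq 1$; the cost analysis is unaffected, since each $y_t^S$ is still copied at most twice.
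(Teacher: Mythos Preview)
Your approach is essentially the paper's: both proofs sweep left to right, collect consecutive ``bad'' (light) days until the accumulated $z$-mass reaches $1$, and then copy the whole block onto its two endpoints. The paper presents this as an iterative fix-up maintaining a good prefix, while you phrase it as a one-shot partition into groups with keepers, but the underlying mechanism and the charging argument are the same.

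The gap is in your handling of the ``delicate case''. You correctly state the decisive observation---any interval with $z$-mass below $1$ contains no full demand window---but you fail to apply it to this case. If $[s,t]$ lies strictly inside a full group $G_i=[a_i,b_i]$ with $a_i<s$ and $t<b_i$, then $[s,t]\subseteq[a_i,b_i-1]$; but your greedy rule closes the group at the \emph{first} moment the cumulative mass reaches $1$, so $\sum_{r=a_i}^{b_i-1} z_r<1$, forcing $\sum_{r=s}^{t} z_r<1$ and contradicting feasibility of the window. Hence this case is vacuous and no refinement is needed: your unrefined construction already proves the lemma.

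The refinement you propose, besides being unnecessary, does not work as stated. Closing a group at the right endpoint $t$ of a window makes $t$ a group boundary, but if the resulting group is partial then $t$ is not a keeper; the adjacent keeper on the right is $t+1\notin[s,t]$, so the window still sees no keeper. Drop the refinement and instead argue directly that a strictly interior window is impossible.
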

 \begin{proof}
   Copy $y$ to a new solution $\bar{y}$. Call a day $t$ \emph{bad} if $\sum_{S\subseteq V} \bar{y}_t^S \in (0,1)$ and \emph{good} otherwise. We iteratively fix $\bar{y}$, maintaining that at each point there exists an index $k$ such that the first 
   $k$ days are good and $\sum_{t=1}^k \sum_S f(S) \bar{y}_t^S \leq \sum_{t=1}^k \sum_S f(S)  2{y}_t^S$.   
  
   Starting with $k =0$, suppose the first $k$ days are fixed. Find the first day $\ell>k$ that is bad, and let $m>\ell$ be the first index such that the fractional coverage of sets on the interval $[\ell, m]$ sums to at least one, i.e., $\sum_{t=\ell}^m \sum_{S\subseteq V} y_t^S \geq 1$, if such an index exists. We will deal with the other case, i.e., $\sum_{t=\ell}^T \sum_{S\subseteq V} y_t^S < 1$ later. 

   Since $m$ is the minimal, it must be that ${y}$ fractionally covers strictly less than a full set on the interval $[\ell, m-1]$. It follows that every node $v$ whose time window overlaps with $[\ell,m]$ must have $\ell \in F_v$ or $m\in F_v$. So, we can copy all coverage inside $[\ell,m]$ in $y$ to each of the endpoints $\ell$ and $m$ and retain a feasible solution, i.e.,
   \[ \bar{y}_\ell \gets  \sum_{t=\ell}^m y_t \quad \text{ and } \quad  \bar{y}_m \gets \sum_{t=\ell}^m y_t \quad \text{ and } \quad \bar{y}_t \gets 0 \quad \text{for} \quad  \ell< t< m. \]

   After doing this all days up to $m$ are good, and the additional cost can be charged against $y_\ell,\dots,y_m$, so we maintain the cost bound. We only need to deal with the edge case where $\sum_{t=\ell}^T \sum_{S\subseteq V} y_t^S< 1$. In this case, any node $v$ whose time window overlaps with $[\ell, T]$, must have $\ell-1 \in F_v$, otherwise it could not receive at least a unit coverage under $y$. Hence, we can copy all coverage to $\ell-1$ without losing feasibility:
   \[ \bar{y}_{\ell-1} \gets  \sum_{t=\ell}^T y_t \quad \text{ and }\quad \bar{y}_t \gets 0 \quad \text{for} \quad  \ell< t\leq  T.
   \]
   The cost of this action can be charged against $y_{\ell},\dots,y_T$. Since, we already know $\ell-1$ is a good day and we increase its coverage, we only need to worry about days $\ell$ to $T$, which get set to $0$ and thus become good. So, we conclude all days are good and $\bar{y}$ costs at most twice the cost of $y$, finishing the proof. 
   \end{proof}

   \begin{proofof}{\Cref{thm:timebound}}
     By Lemma~\ref{lem:wellseparated} we can reduce a left-aligned instance to a collection of instances with $\min_{v \in V} f(v) \geq \frac1{N} \max_{v} f(v)$.
     By Lemma~\ref{lem:sparsity}, we may assume we have for each of those instances an LP solution with $\sum_{S \subseteq V} y_t^S \geq 1$ or $\sum_{S \subseteq V} y_t^S =0$ for all $t$. At this point, we might as well delete the days $t$ with $\sum_{S \subseteq V} y_t^S=0$ from the instance. So, we can assume that $\sum_{S \subseteq V}y_t^S \geq  1$ for all $t$.  

Suppose that we simply schedule the entire set $V$ at time $t=N^2$. By Lemma~\ref{lem:wellseparated}, this costs at most 
\[f(V) \leq N \max_{v \in V} f(v) \leq N^2 \min_{v \in V} f(v).\]  
   By Lemma~\ref{lem:sparsity}, on the other hand, the cost of the LP solution for any day $t$ is at least 
   \[\sum_{S \subseteq V} f(S) y_t^S \geq \sum_{S \subseteq V} (\min_{v \in V} f(v) ) y_t^S \geq   \min_{v \in V} f(v) .\] 
   So, the LP solution for the first $N^2$ days costs at least $N^2 \min_{v \in V} f(v)$. 
   This means that we can charge the cost $f(V)$ of covering all elements at day $N^2$ against the LP, losing only a constant factor in the cost of the solution. 
  
   If we cover all elements on day $N^2$, we effectively reset the instance and we only need a schedule for the first $N^2$ days. Hence, we can delete any remaining time periods for the instance and we complete the proof. 
 \end{proofof}
 }

\ipcoskip
{
\subsection{Further simplifications}
\label{sec:reductionsummary}
Finally, we make some assumptions that are mainly useful to reduce notation in the coming sections. 
We assume that every item $v\in V$ has only one time $t$ for which $d_{vt} > 0$. 
If not we make multiple copies of each item, one for every day $t$ with positive demand.
\Cref{thm:timebound} ensures that this only increases the number of items by at most a factor of $N^2$.
We also assume that $T$ is of the form $2^{2^k}$ for $k\in \mathbb{N}$, so that $\log \log T$ is a positive integer. 
If this is not the case, we can simply round up $T$.
Our assumptions are summarized by the following definition.
\begin{definition}
  \label{def:niceinstance}
  An instance of \saddc{} is \emph{nice} if

  \smallskip

  \begin{compactenum}[(i)]
    \item the time windows form a left-aligned family,
    \item each item has positive demand on exactly one day, and
    \item $T = 2^{2^k}$ for some $k\in \mathbb{N}$.
  \end{compactenum}
\end{definition}
}

\section{Steiner tree over time}\label{sec:irp}
So in order to prove \Cref{thm:irpapx}, we need to give an $O(\log \log T)$-approximation algorithm to nice instances of \saddc{}, for the appropriate class of order functions $f$.
More precisely, $V$ is the set of nodes of a semimetric space with distance function $c : V\times V  \to \Rnn$; a root node $r \in V$ is specified, and for all other nodes $v \in V \setminus \{r\}$, a time window $F_v  = [\twl_v, \twr_v] \subseteq [T]$ is given.
Since $f(S)$ denotes the cost of a cheapest tree containing $S \cup \{r\}$, we will consider a solution to be described by a collection of trees $(\mathcal{T}_t)_{t \in T}$, all containing the root.
To be feasible, each node $v \neq r$ must be contained in $\mathcal{T}_t$ for some $t \in F_v$.
The cost of a tree $\mathcal{T}$ (i.e., the sum of the length of its edges) is denoted $c(\mathcal{T})$;
the objective is to minimize the total cost $\sum_t c(\mathcal{T})$. %
We will call this the \emph{Steiner tree over time problem}.%
\fullv{\footnote{
    The \sttf{} problem should not be confused with the problem known in the literature as the \emph{minimum spanning tree with time windows} problem~\cite{solomon1986minimum}.   
    This also involves a root and a set of nodes with time windows, but the goal is to construct a single tree, and to provide an order on the nodes such that a vehicle travelling at unit speed on the tree (optionally waiting at nodes) can visit the nodes within their time windows.
}
}

For ease of notation, we associate with each node $v$ a \emph{level} $\ell(v)$ in the natural way, namely as the level of $F_v$ in the left aligned family $\bigcup_{v \in V\setminus \{r\}} F_v$.

\ipcoskip{\subsection{Fractional path relaxation}}
The main part of our result works by iteratively massaging a specific type of fractional solution until it becomes integral. We now describe this type of solution.

We let $\mathcal{P}$ denote the collection of directed paths in $V$. For each such directed path $P \in \mathcal{P}$, let $P \odot_t v$ signify that $P$ connects $v$ to $\mathcal{T}_t$, i.e., $P \odot_t v$ if there is a directed subpath on $P$ from $v$ to a node in the tree $\mathcal{T}_t$ containing the root on day $t$. 
If $v\in \mathcal{T}_t$ we let $P\odot_t v$ hold for all $P$ by convention.

\begin{definition} %
  \label{def:fpsol}
  A \emph{fractional path solution} (\fpsabv{}) is a pair $(\mathcal{T}, w)$, giving for each day $t$ a tree $\mathcal{T}_t$ rooted at $r$ and weights $w_t : \mathcal{P} \to [0,1]$, with the property that
  \ifipco{%
      \[ \ssum_{t\in F_v} \ssum_{P \in \mathcal{P}:  P\odot_t v} w_t(P) \geq 1 \qquad \forall v\in V \setminus \{r\} .\] 
      }{%
      \[ \sum_{t\in F_v} \sum_{\substack{P \in \mathcal{P}: \\ P\odot_t v}} w_t(P) \geq 1 \qquad \forall v\in V \setminus \{r\} .\] 
  }
  The cost of the \fps{} is given by the sum of the cost of the trees and the cost of the paths weighted by $w$: 
  \[\ssum_{t \in [T]} \ssum_{P \in \mathcal{P}} w_t(P) c(P) + \ssum_{t \in [T]} c(\mathcal{T}_t) .\]
\end{definition}

Note that a \fps{} with $w_t(P) \in  \{0,1\}$ for all $P$ and $t$ yields a feasible solution to the \sttf{} problem. 
Moreover, we can start with a solution $y$ to \eqref{lp:saddc} and convert it to a \fps{} at the loss of a constant
(or more directly, we can solve a compact LP corresponding to \fps{}). %
To do this, start by initializing all trees $\mathcal{T}_t$ to contain only the root. 
    Then for each day $t$ and set $S$ in $\supp(y_t)$, construct a minimum spanning tree on $S \cup \{r\}$, 
    and use this to define a path $P$ to $r$ that contains $S$ and has cost at most twice the cost of this spanning tree (simply shortcut the doubled tree). Add $P$ to the solution with weight $w_t(P) = y_t^S$.

Hence, we focus on turning a \fps{} into an integral one without losing too much in the cost of the solution. 

We need some preliminary notation and definitions.
Given a directed path $P$, we use $V(P)$ and $E(P)$ to denote the node set and edge set, respectively.
We similarly define $V(\mathcal{T})$ and $E(\mathcal{T})$ for a tree $\mathcal{T}$.
\ifipco{The head and tail of $P$ are denoted $\head(P)$ and $\tail(P)$ respectively.}%
{The head of $P$, denoted $\head(P)$, is the unique node in $P$ without an outgoing edge; similarly $\tail(P)$ is the unique node of $P$ without an incoming edge.}
Given a collection of edges $R$, we denote by $P\disconn{R}$ the collection of directed paths obtained by deleting all edges in $R$ from $P$, disconnecting $P$ into multiple subpaths (we allow paths consisting of only a single node). 
Given an edge $e\in E(P)$, we use $\pprefix{P}{e}$ to denote the path in
$P\disconn{} \{e\}$ containing $\tail(P)$. 
Given a tree $\mathcal{T}$ and path $P$ with $\head(P)$ in $\mathcal{T}$, \emph{adding} $P$ to $\mathcal{T}$ (which we may write as $\mathcal{T} + P$) results in a spanning tree of the union of $\mathcal{T}$ and $P$.
In particular, $\mathcal{T} + P$ is a tree, costing no more than $c(\mathcal{T}) + c(P)$, and spanning $V(\mathcal{T}) \cup V(P)$.

The rounding algorithm will consist of a number of iterations.
Each iteration will increase the size of the integral part $(\mathcal{T}_t)_{t \in [T]}$, while reducing the size of the fractional part $(w_t)_{t \in [T]}$, until the solution is entirely integral.
We will ensure that the cost increase in the integral part is an $O(\log \log T)$ factor times the cost decrease in the fractional part, which clearly yields the required approximation guarantee.

Each iteration of the rounding scheme involves two steps.
The first step is, for each $t \in [T]$, to independently sample the paths according to the weights $w_t(P)$, upscaled by a factor $K\log \log T$; $K$ is a fixed constant we will choose later.
These paths are added (one by one) to $\mathcal{T}_t$.
The total cost increase due to this step is $O(C \log \log T)$, where $C = \sum_{t \in [T]} \sum_{P \in \mathcal{P}} w_t(P)c(P)$ is the total cost of the fractional part.
Our goal will now be to adjust the fractional paths in a way that reduces the total fractional cost by a constant factor, while maintaining feasibility.
This will clearly lead to our desired approximation ratio: 
each iteration we pay $O(\log \log T)$ times the decrease in the total fractional cost, and once the total fractional cost reaches zero, we have an integral solution.

The main operation that the algorithm will perform in order to achieve this is a ``split and shift'' operation.
Let $(\mathcal{T}, w)$ be the \fps{} solution after the above sampling step.
Let $P$ be some path in $\supp(w_t)$. 
Our goal will be to 
\begin{itemize}
    \item \textbf{(split)} remove some edges from $P$, resulting in a collection of subpaths $P_1, P_2, \ldots, P_q$, which may \emph{not} have their heads in $\mathcal{T}_t$; and then
    \item \textbf{(shift)} for each one of these paths $P_j$, assign its weight to some day $t_j$, in such a way that now $\head(P_j)$ is in $\mathcal{T}_{t_j}$, and $t_j$ is still in the time windows of all the nodes in $P_j$.
\end{itemize}
This would ensure feasibility, while reducing the fractional cost by $w(P)$ times the total cost of the removed edges.
If each edge is removed with constant probability, we obtain the required cost decrease.

In order to make this work, we need some control of the interaction of the different time windows of the nodes on a given path.
The most important fact, immediate from the left aligned structure, is the following.

\begin{lemma}\label{lem:leveltw} 
    If the time windows of $v$ and $w$ overlap, with $\ell(w) \geq \ell(v)$, then
$\twl_w \leq \twl_v$.
\end{lemma}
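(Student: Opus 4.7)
The plan is to reduce the claim to the standard ``laminar'' structure of dyadic intervals, via the defining property of left-aligned families. The key observation I would first establish is that for every node $v$, the minimal dyadic interval containing $F_v$ has left endpoint exactly $\twl_v$. Indeed, the left-aligned property guarantees the existence of \emph{some} dyadic interval of the form $[\twl_v, t']$ with $t' \geq \twr_v$; among the dyadic intervals containing $F_v$, the minimal one must then also have left endpoint $\twl_v$, because any dyadic interval is of the form $[k2^i+1, (k+1)2^i]$, and the unique dyadic interval of a given level starting at $\twl_v$ can only be shrunk to a smaller dyadic interval by cutting it in half — which would still leave its left endpoint at $\twl_v$. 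Hence $\ell(v) = i_v$ where $I_v := [\twl_v, \twl_v + 2^{i_v} - 1]$ is the minimal dyadic interval containing $F_v$, and analogously for $w$ with interval $I_w$.

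Next I would invoke the well-known fact that any two dyadic intervals are either disjoint or nested (one contains the other). Since $F_v \cap F_w \neq \emptyset$ by hypothesis, and $I_v \supseteq F_v$, $I_w \supseteq F_w$, the dyadic intervals $I_v$ and $I_w$ intersect, hence one contains the other. Because $\ell(w) = i_w \geq i_v = \ell(v)$, the length $2^{i_w}$ of $I_w$ is at least the length $2^{i_v}$ of $I_v$, so it must be $I_w$ that contains $I_v$. Reading off the left endpoints then gives $\twl_w \leq \twl_v$, as desired.

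The proof is essentially just unpacking definitions; the only mild subtlety is establishing that the minimal containing dyadic interval of $F_v$ shares its left endpoint $\twl_v$, which is where the left-aligned hypothesis is genuinely used (as opposed to the purely combinatorial nesting fact about $\mathcal{D}$). I do not foresee any real obstacle, so the write-up should be just a few lines.
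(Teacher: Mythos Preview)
Your argument is correct. The paper does not actually supply a proof of this lemma; it simply states that the fact is ``immediate from the left aligned structure.'' Your proposal is precisely the natural elaboration of that remark: identify the minimal dyadic interval $I_v$ containing $F_v$, observe (using left-alignedness) that its left endpoint is $\twl_v$, and then use the laminarity of $\mathcal{D}$ together with $\ell(w)\ge\ell(v)$ to conclude $I_w\supseteq I_v$ and hence $\twl_w\le\twl_v$. The only slightly informal step is your justification that the minimal dyadic interval containing $F_v$ retains left endpoint $\twl_v$; this is fine, and can be made airtight by noting that since $\twl_v-1$ is a multiple of $2^j$ for some $j$ with $2^j\ge \twr_v-\twl_v+1$ (this is exactly left-alignedness), it is also a multiple of $2^i$ for every $i\le j$, so the level-$i$ dyadic interval through $\twl_v$ has left endpoint $\twl_v$ for all such $i$.
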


This means that if we consider a path $P'$ with head $v'$ (which might be a subpath of a path in $\supp(w_t)$ say) where $\ell(v')$ is minimal amongst all nodes in $P'$, then
any $t' \in F_{v'}$ with $t' \leq t$ will be in $F_v$ for all $v \in V(P')$.

The following definition will aid us in shifting always to earlier days,
so that the above can be applied. 

\begin{definition}
    Given a \fps{} $(\mathcal{T}, w)$, for each $v\in V \setminus \{r\}$, 
    let $m_v$ be maximal such that %
    \ifipco{%
                \[ \ssum_{t=m_v}^{\twr_v} \ssum_{P \in \mathcal{P}: P \odot_t v} w_t(P) \geq \frac12. \]
                }{%
                \[ \sum_{t=m_v}^{\twr_v} \sum_{\substack{P \in \mathcal{P}:\\P \odot_t v}} w_t(P) \geq \frac12. \] %
            }
  Then for any $v \in V$, we call $[\twl_v, m_v]$ the \emph{sow} phase of $v$ and $[m_v, \twr_v]$ the \emph{reap} phase of $v$.
  \emph{(Note that the sow and reap phases both contain $m_v$.)}
\end{definition}

Let $S_v$ and $R_v$ denote the sow and reap phases of $v \in V$, at the start of this iteration.
The idea is that we will first 
adjust each path in day $t$ so that, other than the head, $t \in R_v$ for each $v$ on the path.
We do this by simply shortcutting past other nodes, which cannot increase the cost.
This of course reduces the connectivity of the nodes removed from the path.
To fix this, we simply double all weights; 
since the total connectivity of a node attributable to paths not in the reap phase of that node is at most $\tfrac12$, 
this suffices to regain feasibility.
Let $w'$ denote the resulting weights.

At this point, for each path $P$ in $\supp(w'_t)$, $t \in R_v$ for each $v \in P$.
This means that if $P$ were to be shifted to any time in $S_v$ for some $v \in P$, then this is certainly a shift to an earlier time.
We now split $P$ by removing all \emph{fully redundant} edges from it, as per the following definition.

\begin{definition}
    We say that node $v \in V$ has \emph{germinated} if it lies in $V(\mathcal{T}_t)$ for some $t\in S_v$.
    For $i=0, 1,2,\ldots, \log T$ and a path $P$, we say an edge $e\in E(P)$ is $i$-\emph{redundant} if the last node $v$ on $\pprefix{P}{e}$ with level $\ell(v) \leq i$ has germinated, or if all nodes in $\pprefix{P}{e}$ have level strictly larger than $i$. 
    An edge $e \in E(P)$ is \emph{fully redundant} if it is redundant for all $i=0,1,\ldots, \log T $. 
\end{definition}
Note that if $vz \in E(P)$ is fully redundant, then $v$ certainly must have germinated, by considering $i=\ell(v)$.
An important property is the following\ifipco{, proved in the appendix}{}.
\begin{lemma}\label{lem:redsplit}
    Let $Q$ be the set of fully redundant edges on a path $P \in \supp(w'_t)$.
    Then for any path $P' \in P \disconn Q$ with $v := \head(P') \neq \head(P)$, $v$ has the largest level amongst all nodes of $P'$.
\end{lemma}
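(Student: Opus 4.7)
My plan is a contradiction argument that plays the fully-redundant edge at the end of $P'$ against the non-fully-redundant edges inside $P'$. Label $P = u_0 u_1 \cdots u_n$ in path order with $u_n = \head(P)$, and write $P' = u_a \cdots u_b$ so that $v = u_b$. Because $v \neq \head(P)$ we have $b < n$, which forces the edge $u_b u_{b+1}$ to lie in $Q$ and hence to be fully redundant, while every edge $u_j u_{j+1}$ with $a \leq j < b$ lies inside $P'$ and is therefore \emph{not} fully redundant. These two facts are the sole input to the proof.

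Suppose, for contradiction, that some interior node $u_j$ of $P'$ violates the claimed level comparison with $u_b$; pick such a $j$ \emph{as large as possible} in $[a,b)$. The point of this maximal choice is that every intermediate node $u_{j+1}, \ldots, u_{b-1}$ now sits on the ``correct'' side of $\ell(u_b)$, so on the longer prefix $\pprefix{P}{u_b u_{b+1}} = u_0 \cdots u_b$ none of these intermediate nodes can play the role of the ``last node of level $\leq i$'' in the relevant ranges of $i$. The goal is to use this to show that $u_j u_{j+1}$ is itself fully redundant, contradicting $u_j u_{j+1} \notin Q$.

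Concretely, fix $i \in \{0, 1, \ldots, \log T\}$ and compare the last-node-of-level-$\leq i$ on the two prefixes $\pprefix{P}{u_j u_{j+1}} = u_0 \cdots u_j$ and $\pprefix{P}{u_b u_{b+1}} = u_0 \cdots u_b$. Case-split on whether $i \geq \ell(u_j)$ or $i < \ell(u_j)$. In the first regime, $u_j$ is itself the relevant last node on the short prefix; applying the $\ell(u_j)$-redundancy of $u_b u_{b+1}$ together with the maximality of $j$ (which rules out any $u_{j'}$, $j < j' \leq b$, being a later qualifier on the long prefix) forces $u_j$ to have germinated, so $u_j u_{j+1}$ is $i$-redundant for every such $i$. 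In the second regime $u_j$ is disqualified altogether, and the same maximality ensures that the two prefixes share the same last qualifying node (or both lack one); so $i$-redundancy of $u_b u_{b+1}$ transfers directly to $u_j u_{j+1}$. Combining both regimes, $u_j u_{j+1}$ is $i$-redundant for all $i$, and is therefore fully redundant, which is the contradiction we sought.

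The main delicate point is the matching of ``last qualifying nodes'' across the two prefixes in the second case; this is exactly what the maximal choice of the counterexample index $j$ buys us, since without it an intermediate $u_{j'}$ might appear on the longer prefix and break the correspondence. Once that matching is in place the rest of the argument is a mechanical check against the definition of $i$-redundancy.
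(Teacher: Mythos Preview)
Your proof is correct and follows essentially the paper's approach: the paper extracts the key step as a standalone claim (if $e$ is fully redundant and $u$ is the last node of level $\le i$ on $\pprefix{P}{e}$, then the edge leaving $u$ is also fully redundant), while you inline the identical prefix-matching argument directly at your maximal counterexample $u_j$.

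One caveat worth making explicit: the printed lemma says ``largest'', but this is a typo---the paper's own use of the lemma later invokes ``$\ell(v)$ is minimal'', and indeed your Case~1 step (``maximality of $j$ rules out any $u_{j'}$, $j<j'\le b$, being a later qualifier'') only goes through when the violation reads $\ell(u_j) < \ell(u_b)$, so that $\ell(u_{j'}) \ge \ell(u_b) > \ell(u_j)$ for the intermediate nodes. Since you left the direction deliberately vague (``violates the claimed level comparison''), you should state that you are proving $v$ has the \emph{smallest} level on $P'$.
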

\ipcoprooflater{redsplit}{lem:redsplit}{}{
    We prove the following, from which the lemma is an immediate consequence.
\begin{inclaim}
  Given a fully redundant edge $e\in E(P)$ and $i \in \znn$, let $u$ be the last node on $\pprefix{P}{e}$ with $\ell(v) \leq i$. 
    Then the edge $e' \in E(P)$ leaving $u$ is also fully redundant.
\end{inclaim}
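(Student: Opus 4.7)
The in-claim asserts that the outgoing edge $e'$ of $u$ inherits full redundancy from $e$, whenever $u$ is chosen as the last node on $\pprefix{P}{e}$ with $\ell(u) \leq i$. My plan is to fix an arbitrary $j \in \znn$ and verify $j$-redundancy of $e'$ directly from the definition, splitting on whether $j < i$ or $j \geq i$. The key structural observation, used throughout, is that $\pprefix{P}{e'}$ is exactly the prefix of $P$ ending at $u$, and it is obtained from $\pprefix{P}{e}$ by deleting a (possibly empty) suffix consisting of nodes whose level is strictly greater than $i$ (since $u$ is the latest node on $\pprefix{P}{e}$ with level $\leq i$).

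For $j < i$: the deleted suffix consists entirely of nodes with level $> i > j$, so the set of nodes with level $\leq j$ is identical on $\pprefix{P}{e}$ and on $\pprefix{P}{e'}$. Hence the ``last node with level $\leq j$'' on $\pprefix{P}{e'}$ either does not exist (in which case the second clause of $j$-redundancy applies to $e'$), or it coincides with the same node on $\pprefix{P}{e}$, which has germinated by $j$-redundancy of $e$ (giving the first clause for $e'$).

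For $j \geq i$: I claim $u$ itself is the last node on $\pprefix{P}{e'}$ with level $\leq j$, since $\ell(u) \leq i \leq j$ and $u$ is the terminal node of $\pprefix{P}{e'}$. Thus it suffices to show that $u$ has germinated. For this I invoke $\ell(u)$-redundancy of $e$: $u$ itself is a node on $\pprefix{P}{e}$ with level $\leq \ell(u)$, and every node strictly later on $\pprefix{P}{e}$ has level $> i \geq \ell(u)$, so $u$ is in fact the last node on $\pprefix{P}{e}$ with level $\leq \ell(u)$. Full redundancy of $e$ then yields that $u$ germinated, completing this case.

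I don't anticipate a significant obstacle; the proof is essentially a careful unpacking of the definition of $i$-redundancy, and the only slightly nonobvious decision is to instantiate the full redundancy of $e$ at level $\ell(u)$ (rather than at $j$) in the second case. The argument uses only the hypothesis that $e$ is fully redundant and the definition of $u$; no germination property is needed beyond what is already encoded in the full redundancy of $e$.
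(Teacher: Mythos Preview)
Your proof is correct and follows essentially the same approach as the paper's: both arguments hinge on the observation that all nodes strictly between $u$ and $e$ on $P$ have level $> i$, so the ``last node with level $\leq j$'' is unchanged when passing from $\pprefix{P}{e}$ to $\pprefix{P}{e'}$ for $j < i$. The paper's proof is simply terser---it handles the case $j \geq i$ implicitly (asserting directly that any failing $j$ must satisfy $j < i$, since $u$ has germinated by $i$-redundancy of $e$), whereas you spell this case out and invoke $\ell(u)$-redundancy; using $i$-redundancy of $e$ would have sufficed there as well and is slightly more direct.
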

    Suppose not. Then there is some $j<i$ such that the last node $z\in \pprefix{P}{e'}$ with $\ell(z) \leq j$ has not germinated. 
    But since $u$ is the last node on $\pprefix{P}{e}$ with $\ell(u)\leq i$, clearly $z$ is the last node on $\pprefix{P}{e}$ 
    with $\ell(z) \leq j$. 
    This implies that $e$ is not fully redundant, a contradiction. 
}

%
The final step of the iteration proceeds as follows.
Consider each $t \in [T]$ and $P \in \supp(w_t)$ in turn.
Let $P_1, \ldots, P_q$ be the collection of paths obtained by removing all fully redundant edges from $P$.
Then for each $P_j$, shift it to the latest day $t_j$ for which $t_j \in F_{\head(P_j)}$. 
\begin{lemma}
    The solution is still feasible after the final phase of an iteration.
\end{lemma}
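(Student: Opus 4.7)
The plan is to show, for each $v \in V \setminus \{r\}$, that the updated solution still covers $v$. If $v$ has already germinated---say $v \in V(\mathcal{T}_{t^\ast})$ for some $t^\ast \in S_v \subseteq F_v$---then the convention $P \odot_{t^\ast} v$ gives trivial coverage through the tree, so I would focus on the remaining case that $v$ has not germinated. In that case, just before the final phase the shortcut-and-double step guarantees $\sum_{t \in R_v} \sum_{P\,\odot_t\,v} w'_t(P) \geq 1$, and my plan is to track each contributing pair $(t,P)$ through the split-and-shift step, verifying that it still contributes the same weight $w'_t(P)$ to $v$'s coverage on some day inside $F_v$.

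Fix such a pair. Then $v \in V(P)$ lies in a unique sub-path $P_j$ of $P \disconn Q$, and I would first rule out that $v$ is the head of any $P_j$ with $j < q$: if the edge leaving $v$ on $P$ were fully redundant, then applying the definition at $i = \ell(v)$---with $v$ itself being the last node on its own prefix with level $\leq \ell(v)$---would force $v$ to have germinated, contradicting the case. So either $v = \head(P_q) = \head(P)$, and then $v \in \mathcal{T}_t$ with $t \in F_v$ covers $v$ integrally through the tree; or $v$ is strictly interior to some $P_j$. When $j = q$, I would retain $t_q = t$, keeping $\head(P_q) \in \mathcal{T}_t$ so that $P_q \odot_t v$ with $t \in F_v$. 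When $j < q$, the head $h := \head(P_j)$ has germinated (by the same argument at $i = \ell(h)$), so the shift picks $t_j \in S_h$ with $h \in V(\mathcal{T}_{t_j})$, ensuring $P_j \odot_{t_j} v$.

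The key remaining step is checking that $t_j \in F_v$ in this last case. My plan is to combine Lemma~\ref{lem:redsplit}, which forces $h$ to attain the (minimal) extreme level in $V(P_j)$---so $\ell(v) \geq \ell(h)$---with Lemma~\ref{lem:leveltw}, applied using $F_v \cap F_h \ni t$, to conclude $\twl_v \leq \twl_h$. Together with $t_j \in S_h \subseteq [\twl_h, m_h]$ and $t \in R_h$ yielding $m_h \leq t \leq \twr_v$, I obtain the chain $\twl_v \leq \twl_h \leq t_j \leq m_h \leq t \leq \twr_v$, so $t_j \in F_v$. Summing $w'_t(P)$ over all eligible $(t,P)$ with $t \in R_v$ and $P \odot_t v$ then gives $v$'s total coverage at least $1$, as required. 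The main obstacle I expect is this window-membership check: Lemma~\ref{lem:redsplit} combined with Lemma~\ref{lem:leveltw} only controls $\twl_v$, so the upper bound $t_j \leq \twr_v$ must be supplied separately via the germination constraint $t_j \leq m_h \leq t$---making it essential that the shift always goes backwards in time.
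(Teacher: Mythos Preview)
Your proposal is correct and follows essentially the same route as the paper: reduce to tracking a single node through the split-and-shift of a single path, then use \Cref{lem:redsplit} to control the level of the subpath head and \Cref{lem:leveltw} to push the window's left endpoint below $t_j$. Your argument is in fact more complete than the paper's in two respects. First, the paper's proof only writes out the lower bound $\twl_u \leq t_j$ and asserts $t_j \in F_u$; you supply the missing upper bound explicitly via the chain $t_j \leq m_h \leq t \leq \twr_v$, correctly identifying that this relies on the shift going backwards in time (to a sow-phase day of the head). Second, you make explicit the easy case where $v$ has already germinated, and you justify why a non-germinated $v$ cannot itself be the head of a non-final subpath---both points the paper leaves implicit. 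The only cosmetic difference is that the paper argues per pair $(t,P,u)$ without first separating out the germinated case, whereas you restrict attention to non-germinated $v$ from the start; this changes nothing substantive.
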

\begin{proof}
    Consider any $t \in [T]$ and path $P \in \supp(w'_t)$, and any $u \in V(P)$.
    Choose $j$ so that $u \in P_j$; let $v = \head(P_j)$.
    We must show that $P_j$ on day $t_j$ supplies connectivity for $u$, i.e., that $P_j \odot_{t_j} u$.
    If $v = \head(P)$ then $t_j = t$ and there is nothing to prove.
    Otherwise, by \Cref{lem:redsplit}, $\ell(v)$ is minimal amongst the levels on the path $P_j$,
    and so by \Cref{lem:leveltw}, $a_u \leq t_j$.
    Thus $t_j \in F_u$, as required. \ipcoqed
\end{proof}
All that remains for the cost bound is to show that each edge is fully redundant with sufficiently large constant probability (enough to counteract the doubling of the weights).
\ifipco{The proof of the following is given in the appendix.}{}
\begin{lemma}\label{lem:costdec}
    For each $t \in [T]$, $P \in \supp(w'_t)$, and $e \in P$, $e$ is fully redundant with probability at least $\tfrac34$, assuming $K$ is chosen sufficiently large.
\end{lemma}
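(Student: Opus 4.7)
My plan is to argue $i$-redundancy level-by-level for $i \in \{0, 1, \ldots, \log T\}$, and then take a union bound over the $O(\log T)$ levels. Fix $i$. If every node on $\pprefix{P}{e}$ has level strictly greater than $i$, then $e$ is $i$-redundant by definition, so assume otherwise and let $v_i$ denote the last node on $\pprefix{P}{e}$ with $\ell(v_i) \leq i$. Then $e$ is $i$-redundant precisely when $v_i$ has germinated after the sampling step, so the task reduces to upper-bounding $\Pr[v_i \text{ has not germinated}]$ uniformly in $i$.

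The key quantitative ingredient I will need is that the sow phase of any node $v$ accounts for at least half of its total fractional connectivity: $\sum_{t \in S_v} \sum_{P' : P' \odot_t v} w_t(P') \geq \tfrac{1}{2}$. This follows directly from the definition of $m_v$ as the maximal index with $\sum_{t \geq m_v} \sum_{P' \odot_t v} w_t(P') \geq \tfrac{1}{2}$: maximality forces $\sum_{t > m_v} \sum_{P' \odot_t v} w_t(P') < \tfrac{1}{2}$, and combining with the FPS feasibility $\sum_{t \in F_v} \sum_{P' \odot_t v} w_t(P') \geq 1$ yields the sow-phase lower bound. With this lemma in hand, $v_i$ fails to germinate only if none of the paths $P'$ satisfying $P' \odot_t v_i$ for some $t \in S_{v_i}$ are sampled; since paths are sampled independently across days and across paths with probability $\min(1, K \log \log T \cdot w_t(P'))$ on day $t$, the estimate $1 - x \leq e^{-x}$ gives $\Pr[v_i \text{ not germinated}] \leq \exp\bigl(-\tfrac{K}{2} \log \log T\bigr) = (\log T)^{-K/(2\ln 2)}$.

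A union bound over the at most $\log T + 1$ levels then gives $\Pr[e \text{ not fully redundant}] \leq (\log T + 1) \cdot (\log T)^{-K/(2\ln 2)}$, which can be driven below $\tfrac{1}{4}$ by choosing $K$ a sufficiently large absolute constant; the finitely many small values of $T$ for which $\log T$ is too small for the tail estimate to bite can be absorbed by further increasing $K$. The step I expect to require the most care is the sow-phase coverage lower bound of $\tfrac{1}{2}$ together with checking that the sampling truly provides independence across the days and paths contributing to $v_i$'s germination; once those are secure, the remaining union bound is routine.
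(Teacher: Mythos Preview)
Your proposal is correct and follows essentially the same approach as the paper: establish the sow-phase coverage lower bound of $\tfrac12$, use it to show each relevant node fails to germinate with probability at most $e^{-K\log\log T/2}$, and take a union bound over the $\log T + 1$ levels. You in fact give more detail than the paper does, explicitly deriving the sow-phase bound from the maximality of $m_v$ and noting the small-$T$ edge cases.
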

\ipcoskip
{
    \begin{proof}
    Consider some $v \in V$. 
    Since $\sum_{t \in S_v} \sum_{P: P \odot_t v} w_t(P) \geq \tfrac12$, standard calculations imply that 
    the probability none of the paths in this sum are sampled is at most
    $\leq e^{-K\log \log T/2} = \epsilon / \log T$, 
    where $\epsilon = \log(-K/2)$.
  
    Now consider an arbitrary path $P \in \supp(w_t)$ for some $t$ and an arbitrary $e \in E(P)$.
    The edge is fully redundant unless it is not $i$-redundant for some $i=0,1,\ldots,\log T$. 
  Now, $e$ is not $i$-redundant if the last node $u$ on $\pprefix{P}{e}$ with $\ell(u)\leq i$ has not germinated; 
  by the above, this happens with probability at most $\epsilon/\log T$. 
  By a union bound, $e$ is not fully redundant with probability at most 
  \[ (\log T + 1 ) \epsilon /\log T \leq 2 \epsilon.\]
  By choosing $K$ large enough (and hence $\epsilon$ small enough) this is less than $\tfrac14$.
\end{proof}
}
\ifipco{
To complete the proof of \Cref{thm:irpapx}, we should observe that the number of iterations is polynomial (in expectation).
This is fairly clear, and we omit the details in this extended abstract.
}{
To complete the proof of \Cref{thm:irpapx}, we should observe that the algorithm can be implemented to run in polynomial time.
  As a consequence of Lemma~\ref{lem:costdec}, each iteration any unconnected node gets connected to the root with constant probability. 
  It follows that all nodes are connected after $O(\log N)$ iterations, in expectation and with high probability.
  Implementing each iteration in time polynomial in $|\supp(y)|$ and $N$ is straightforward:
  we just need to make sure that the path splitting operations does not cause the number of paths in the support of the \fps{} to explode.  This is not a problem though: since every path must contain at
least one node, every path in the input \fps{} can `generate' at most $N$ new paths down the line, a polynomial increase.
}

\section{Submodular cover over time}\label{sec:sjrp}
Here we consider the \saddc{} problem where in addition to the previously required properties, $f$ is submodular\ifipco{}{(in other words, $f$ is a polymatroid set function)}.
We assume throughout that we have a nice instance, and use $F_v$ to denote the single time window for $v \in V$.

\ifipco{%
First, we observe that the LP relaxation \eqref{lp:saddc} has an equivalent convex formulation in terms of the \lovaszextension{} $\hat{f}$ of $f$. }%
{Recall that the Lovasz-extension of a submodular function $f : 2^V \to \mathbb{R}$ is the function $\hat{f} : [0,1]^V \to \mathbb{R}$ defined as:
\[ \int_0^1 f(\{v : x_v \geq \theta\}) \,d\theta .\]
Note that $\hat{f}$ corresponds exactly to $f$ for integral inputs. 
It is well-known that the \lovaszextension{} of a submodular function is convex~\cite[Sec. 6.3]{fujishige2005submodular}. 
Hence the following program is a convex relaxation of the \subc{} problem. }
\begin{equation}
  \label{lp:lplovasz}
  \begin{aligned}
  \text{min } \qquad &  \ssum_{t\in [T]} \hat{f}(x^t) \\
  \text{s.t.}\qquad & \ssum_{t \in F_v} x_v^t = 1  \qquad \qquad \forall v \in V\\
  & x \geq 0
  \end{aligned}
\end{equation}
\ipcoskip{
In fact, it is precisely equivalent to \eqref{lp:saddc}.
Since $f$ is submodular, there is an optimal solution to \eqref{lp:saddc} such that the support of every day is a chain.
Then it is easy to verify that $\sum_{S} f(S) y^S$ is exactly equal to $\hat{f}(x)$ where $x_v = \sum_{S : v\in S} y^S$ . 
}

\ipcoskip{\subsection{Peliminaries}}
For $x \in [0,1]^V$ and $\theta \in [0,1]$, we define the \emph{level set}
\ifipco{$L_{\theta}(x) = \{ v \in V : x_v\geq \theta \}$.}%
{\[ L_{\theta}(x) = \{ v \in V : x_v\geq \theta \} .\]}
Then
\ifipco{$\hat{f}(x) = \E_\theta[f(L_{\theta}(x))]$, where $\theta \sim \text{Uniform}(0,1)$.}%
{ \[ \hat{f}(x) = \E_\theta[f(L_{\theta}(x))],  \qquad \theta \sim \text{Uniform}(0,1). \] }
\ipcoskip{

This perspective has a nice graphical interpretation which will prove useful in understanding the proof of an important technical lemma later on, and the algorithm in general. If we plot $f(L_{\theta}(x))$, for $\theta$ in the interval $[0,1]$, then $\hat{f}(x)$ is the area between the graph and the horizontal and vertical axes. See Figure~\ref{fig:lovaszplot} for an example.

\begin{figure}[htb!]
  \centering
\def\svgwidth{0.65\textwidth}
\begingroup%
  \makeatletter%
  \providecommand\color[2][]{%
    \errmessage{(Inkscape) Color is used for the text in Inkscape, but the package 'color.sty' is not loaded}%
    \renewcommand\color[2][]{}%
  }%
  \providecommand\transparent[1]{%
    \errmessage{(Inkscape) Transparency is used (non-zero) for the text in Inkscape, but the package 'transparent.sty' is not loaded}%
    \renewcommand\transparent[1]{}%
  }%
  \providecommand\rotatebox[2]{#2}%
  \newcommand*\fsize{\dimexpr\f@size pt\relax}%
  \newcommand*\lineheight[1]{\fontsize{\fsize}{#1\fsize}\selectfont}%
  \ifx\svgwidth\undefined%
    \setlength{\unitlength}{190.82018369bp}%
    \ifx\svgscale\undefined%
      \relax%
    \else%
      \setlength{\unitlength}{\unitlength * \real{\svgscale}}%
    \fi%
  \else%
    \setlength{\unitlength}{\svgwidth}%
  \fi%
  \global\let\svgwidth\undefined%
  \global\let\svgscale\undefined%
  \makeatother%
  \begin{picture}(1,0.43677901)%
    \lineheight{1}%
    \setlength\tabcolsep{0pt}%
    \put(0,0){\includegraphics[width=\unitlength,page=1]{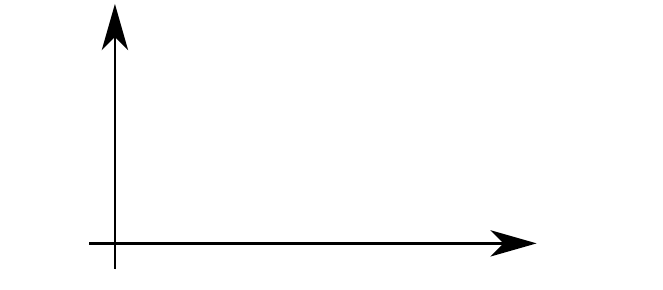}}%
    \put(0.13016522,0.03366856){\color[rgb]{0,0,0}\makebox(0,0)[lt]{\lineheight{1.25}\smash{\begin{tabular}[t]{l}$0$\end{tabular}}}}%
    \put(0.72612133,0.0269199){\color[rgb]{0,0,0}\makebox(0,0)[lt]{\lineheight{1.25}\smash{\begin{tabular}[t]{l}$1$\end{tabular}}}}%
    \put(0.44373371,0.00881269){\color[rgb]{0,0,0}\makebox(0,0)[lt]{\lineheight{1.25}\smash{\begin{tabular}[t]{l}$\theta$\end{tabular}}}}%
    \put(-0.00204197,0.2704631){\color[rgb]{0,0,0}\makebox(0,0)[lt]{\lineheight{1.25}\smash{\begin{tabular}[t]{l}$f(L_{\theta}(x))$\end{tabular}}}}%
    \put(0,0){\includegraphics[width=\unitlength,page=2]{lovaszextension.pdf}}%
    \put(0.33167746,0.11911979){\color[rgb]{0.81960784,0.14509804,0}\transparent{0.91162807}\makebox(0,0)[lt]{\lineheight{1.25}\smash{\begin{tabular}[t]{l}\textbf{\textbf{$\hat{f}(x)$}}\end{tabular}}}}%
  \end{picture}%
\endgroup%

\caption[Graphical interpretation of the \lovaszextension{}.]{Graphical interpretation of the \lovaszextension{} for a polymatroid set function $f$. We have $\theta$ on the horizontal axis, $f(L_{\theta}(x))$ on the vertical axis and $\hat{f}(x)$ is the area between the graph and axes. }
  \label{fig:lovaszplot}
\end{figure}

}
Define the truncation $\trunc{x}{\theta}$ by $\trunc{x}{\theta}_v = \min\{x_v, \theta \}$.
\ipcoskip{
Observe that
\[ \hat{f}(\trunc{x}{\theta}) = \int_{0}^\theta f(L_{\eta}(x)) \,d\eta.  \] 
}
\ipcoskip{Finally, we introduce the concept of an $\alpha$-supported set in Definition~\ref{def:alphasupported}. For now, think of these as level sets for which setting the corresponding entries to zero would yield a large reduction in the cost of the \lovaszextension{}.}
\begin{definition}
  \label{def:alphasupported}
  Given $\theta \in [0,1]$ and $x\in [0,1]^V$, we say that the set $L_{\theta}(x)$ is \emph{$\alpha$-supported} if:
      \begin{equation}
      \label{eq:alphasupported}
    \hat{f}(x) - \hat{f}(\trunc{x}{\theta}) \geq \alpha f(L_{\theta}(x)).
  \end{equation}
\end{definition}
We will provide some \ipcoskip{more} intuition for this definition later, but first we describe the algorithm.

\medskip

\ipcoskip{\subsection{The algorithm}}
    \begin{algorithmic}[1]
        \Require{A solution $x$ to \eqref{lp:lplovasz}.}
        \Ensure{A solution $(S_t)_{t \in [T]}$ to the submodular cover over time problem.}
    \State $S_t \gets \emptyset$ for all $t\in [T]$.
    \For{$i=1,\dots,\log T$}
    \For{$t \in [T]$} \label{stp:lovsupp}
    \If{there exists $\theta \in [0,1]$ such that $L_\theta(x^t)$ is $\frac{1}{32\log\log T}$-supported}
    \State\label{step:add} Choose such a $\theta$ and set
$S_t \gets S_t \cup  L_{\theta}(x^t)$, $x^t \gets \truncss{x}{t}{\theta}$. 
    \Else
    \State \label{step:leaf} $S_t \gets S_t \cup L_1(x^t)$. 
      \EndIf
      \EndFor

    \State\label{step:merge}
Merge time periods by setting, for all $t \in\{ k 2^i : k=0,1,2,\dots\}$,
\ifipco{$\qquad$ \phantom{ddddd}
      $x^{t + 1} \gets x^{t + 1} + x^{t + 2^{i-1}+ 1}\quad$ and $\quad x^{t + 2^{i-1} + 1} \gets 0$.
      }{
\begin{equation*}
      x^{t + 1} \gets x^{t + 1} + x^{t + 2^{i-1}+ 1} \quad \text{ and } \quad  x^{t + 2^{i-1} + 1} \gets 0 .
  \end{equation*}
  }
\EndFor
\State \Return{$(S_t)_{t\in [T]}$.} 
\end{algorithmic}

\paragraph{Feasibility.}
The only steps where the coverage $\sum_{t \in F_v} x^t_v$ for an item $v$ could possibly decrease are steps
\ref{step:add} and \ref{step:merge}.
In step \ref{step:add}, $v$ is added to $S^t$, so this is clearly no problem.
In step \ref{step:merge}, we are shifting weight from some time $t+2^{i-1}+1$ to $t+1$\ifipco{}{(see \Cref{fig:mergetime} for an illustration of this step)}.
This cannot leave the time window of $v$ unless $\ell(v) < i$.
But if $v$ has not been covered by the end of iteration $\ell(v)$,
all of the fractional coverage for $v$ will have been merged into the left endpoint of its time window $t' = \min F_v$. 
This means that $L_\theta(x^{t'})$ contains $v$ for any $\theta$, ensuring $v$ will be added to
$S_{t'}$ in step~\ref{stp:lovsupp} of iteration $\ell(v)+1$. 

\ipcoskip{
\begin{figure}[htb!]
  \centering
\def\svgwidth{0.65\textwidth}
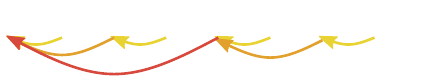
\caption[Illustration of time unit merging in step~\ref{step:merge}.]{Illustration of time unit merging in step~\ref{step:merge} on time horizon $[T] = \{1,\dots,8\}$ over $3$ iterations.  In the first iteration, even time units are merged into the preceding odd time units, in the second iteration, $3$ and $7$ are merged into $1$ and $5$ respectively, and finally $5$ is merged into $1$.}
  \label{fig:mergetime}
\end{figure}
}

\paragraph{Cost analysis.}
This is where the key insights lie.
The main driver is the following technical lemma\ifipco{(the proof is postponed to the appendix)}{}.
\begin{lemma}
  \label{lem:concentration}
  For any $x \in [0,1]^V$ and $\alpha\in (0,1]$, 
\ifipco{either there exists $\theta \in [0,1]$ such that $L_{\theta}(x)$ is $\frac{\alpha}{32}$-supported, or otherwise
    $2^{1/\alpha}f(L_{1}(x))   \leq\hat{f}(x)$.
}{
  at least one of the following holds:
  \begin{enumerate}
    \item\label{case:con1} there exists $\theta \in [0,1]$ such that $L_{\theta}(x)$ is $\frac{\alpha}{32}$-supported.
    \item\label{case:con2} $2^{1/\alpha}f(L_{1}(x))   \leq\hat{f}(x)$.
\end{enumerate}
}
\end{lemma}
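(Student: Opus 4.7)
The plan is to contrapose: assume no level set of $x$ is $\tfrac{\alpha}{32}$-supported and derive $\hat{f}(x) \geq 2^{1/\alpha} f(L_1(x))$. Write $g(\theta) = f(L_\theta(x))$, a non-negative, non-increasing step function on $[0,1]$, and set $G(\theta) = \int_\theta^1 g(\eta)\,d\eta$. Since $\hat{f}(x) - \hat{f}(\trunc{x}{\theta}) = G(\theta)$, the hypothesis reads $G(\theta) < \tfrac{\alpha}{32}\, g(\theta)$ for every $\theta\in[0,1]$. Two edge cases are trivial: if $f(L_1(x)) = 0$ the target inequality holds automatically, and if $g$ vanished at some $\theta$ then the corresponding level set would itself be $\tfrac{\alpha}{32}$-supported (both sides of the defining inequality are zero but the left is $\geq 0$), contradicting the hypothesis; so I may assume $g>0$ everywhere.

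The heart of the argument is a Grönwall-type estimate. On each subinterval where $g$ is constant the pointwise bound rewrites as $G(\theta) < -\tfrac{\alpha}{32} G'(\theta)$, i.e.\ the differential inequality $G'(\theta) < -\tfrac{32}{\alpha} G(\theta)$. Hence the continuous, piecewise-smooth function $h(\theta) := G(\theta)\, e^{32\theta/\alpha}$ satisfies $h'(\theta) < 0$ on each piece, which together with continuity forces $h$ to be non-increasing on $[0,1]$. Therefore $h(\theta) \leq h(0) = \hat{f}(x)$, that is,
\[
G(\theta) \;\leq\; \hat{f}(x)\, e^{-32\theta/\alpha} \qquad \forall\, \theta \in [0,1].
\]

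To extract a bound involving $f(L_1(x))$, I pair this upper bound with the trivial lower bound coming from monotonicity of $f$: since $g(\eta) \geq g(1) = f(L_1(x))$ for all $\eta \in [0,1]$, one has $G(\theta) \geq (1-\theta)\, f(L_1(x))$. Substituting $\theta = \tfrac{1}{2}$ yields
\[
\tfrac{1}{2}\, f(L_1(x)) \;\leq\; G\!\left(\tfrac{1}{2}\right) \;\leq\; \hat{f}(x)\, e^{-16/\alpha},
\]
so $\hat{f}(x) \geq \tfrac{1}{2}\, e^{16/\alpha}\, f(L_1(x))$. It remains to verify that $\tfrac{1}{2}\, e^{16/\alpha} \geq 2^{1/\alpha}$ for $\alpha \in (0,1]$; taking logarithms this reduces to $(16 - \ln 2)/\alpha \geq \ln 2$, which is immediate in this range.

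The only subtlety is that $g$ is a step function, so $G'$ is not literally defined at its jump points. This is a non-issue because $h$ is continuous and piecewise smooth and has strictly negative derivative on each smooth piece, which suffices for monotonicity. No substantive obstacle is anticipated; the lemma is essentially a discrete Grönwall inequality combined with a trivial monotonicity-based lower bound on $G$, and the specific constants ($32$ in the hypothesis, $2^{1/\alpha}$ in the conclusion) leave enormous slack.
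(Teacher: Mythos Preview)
Your argument is correct and is a genuinely different route from the paper's. The paper discretizes: it fixes $k\approx 8/\alpha$, assumes $2^{1/\alpha}f(L_1(x))>\hat f(x)$, and shows there must be a smallest $m$ with $f(L_{(k-m)/k}(x))<2^m f(L_1(x))$ (otherwise $f(L_{1/k}(x))\geq 2^{k-1}f(L_1(x))$ forces $\hat f(x)\geq 2^{1/\alpha}f(L_1(x))$); minimality of $m$ then yields that $L_{(k-m)/k}(x)$ is $\tfrac{\alpha}{32}$-supported. You instead treat the problem continuously: the ``no supported level set'' hypothesis becomes the differential inequality $G'<-\tfrac{32}{\alpha}G$ for $G(\theta)=\int_\theta^1 f(L_\eta(x))\,d\eta$, a Gr\"onwall step gives $G(\theta)\leq \hat f(x)\,e^{-32\theta/\alpha}$, and evaluating at $\theta=\tfrac12$ against the trivial lower bound $G(\tfrac12)\geq\tfrac12 f(L_1(x))$ finishes. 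Your approach is slicker and makes the exponential decay transparent; the paper's discrete doubling argument is more elementary (no calculus, only finitely many comparisons) and has the minor advantage of explicitly exhibiting the supported $\theta$ rather than arguing by contradiction. Both waste a lot of the available constant, as you note.
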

\ifipco{
The intuition for this lemma is that if \ifipco{no $\theta$ with the desired property exists}{(\ref{case:con1}) fails to hold}, it can be shown that $f(L_{\theta}(x))$ decreases quickly everywhere, 
and consequently $f(L_1(x))$ is small compared to $\hat{f}(x) = \E_\theta [f(L_\theta(x))]$.
}{
Before we formally prove this lemma, let us generate some graphical intuition. 
Recall that $L_{\theta}(x)$ is $\alpha$-supported if 
\[ {\alpha} f(L_\theta(x)) \leq  \hat{f}(x) - \hat{f}(\trunc{x}{\theta}).\]
In Figure~\ref{fig:concentration}, the plots of the \lovaszextension{} on two different inputs are shown.  
On the left we have Case 1 of the lemma. 
If we pick the set $L_{\theta'}(x)$ as indicated, the cost is the area of the light blue square. 
However, the dark blue area containing $\hat{f}(x) - \hat{f}(\trunc{x}{\theta'})$ covers a large fraction of that cost. 
On the right, no such set exists. 
This implies the graph must
decrease quite quickly everywhere and therefore lie entirely above some exponential curve. 
Consequently the cost of the set $L_1(x)$, which is the light green rectangle, is small compared to $\hat{f}(x)$, which is the area between the graph and the axes.   

\begin{figure}[htb!]
  \centering
\def\svgwidth{0.7\textwidth}
\begingroup%
  \makeatletter%
  \providecommand\color[2][]{%
    \errmessage{(Inkscape) Color is used for the text in Inkscape, but the package 'color.sty' is not loaded}%
    \renewcommand\color[2][]{}%
  }%
  \providecommand\transparent[1]{%
    \errmessage{(Inkscape) Transparency is used (non-zero) for the text in Inkscape, but the package 'transparent.sty' is not loaded}%
    \renewcommand\transparent[1]{}%
  }%
  \providecommand\rotatebox[2]{#2}%
  \newcommand*\fsize{\dimexpr\f@size pt\relax}%
  \newcommand*\lineheight[1]{\fontsize{\fsize}{#1\fsize}\selectfont}%
  \ifx\svgwidth\undefined%
    \setlength{\unitlength}{266.71447369bp}%
    \ifx\svgscale\undefined%
      \relax%
    \else%
      \setlength{\unitlength}{\unitlength * \real{\svgscale}}%
    \fi%
  \else%
    \setlength{\unitlength}{\svgwidth}%
  \fi%
  \global\let\svgwidth\undefined%
  \global\let\svgscale\undefined%
  \makeatother%
  \begin{picture}(1,0.5484089)%
    \lineheight{1}%
    \setlength\tabcolsep{0pt}%
    \put(0,0){\includegraphics[width=\unitlength,page=1]{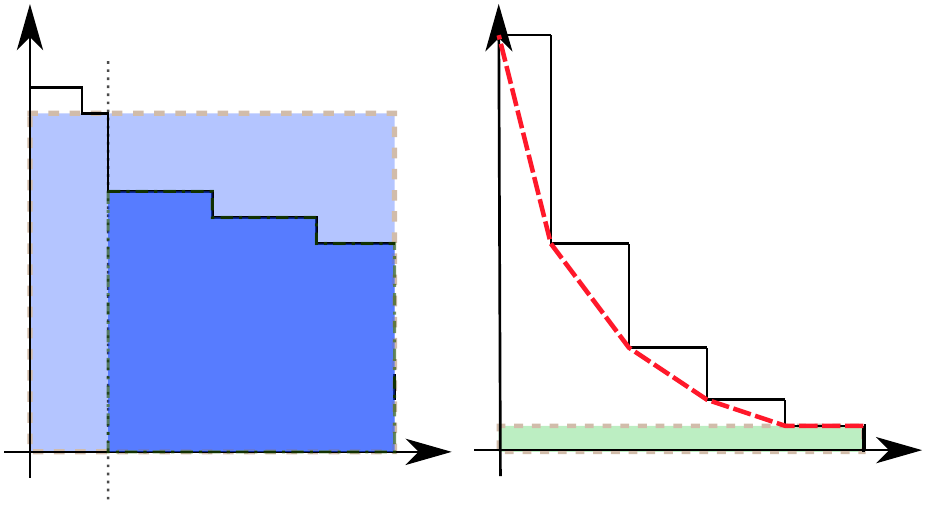}}%
    \put(0.08237466,0.01125151){\color[rgb]{0,0,0}\makebox(0,0)[lt]{\lineheight{1.25}\smash{\begin{tabular}[t]{l}$\theta'$\end{tabular}}}}%
    \put(0.20316349,0.37337456){\color[rgb]{0,0,0}\makebox(0,0)[lt]{\lineheight{1.25}\smash{\begin{tabular}[t]{l}$f(L_{\theta'}(x))$\end{tabular}}}}%
    \put(0.1745705,0.1613008){\color[rgb]{0,0,0}\makebox(0,0)[lt]{\lineheight{1.25}\smash{\begin{tabular}[t]{l}$\hat{f}(x) - \hat{f}(\trunc{x}{\theta'})$\end{tabular}}}}%
    \put(-1.32163814,-0.1405998){\color[rgb]{0,0,0}\makebox(0,0)[lt]{\begin{minipage}{1.40599791\unitlength}\raggedright \end{minipage}}}%
    \put(-1.51847786,-0){\color[rgb]{0,0,0}\makebox(0,0)[lt]{\begin{minipage}{1.99651696\unitlength}\raggedright \end{minipage}}}%
  \end{picture}%
\endgroup%

\caption[Illustration of Lemma~\ref{lem:concentration}.]{Illustration of Lemma~\ref{lem:concentration}. Either there is a set $L_{\theta'}$ whose cost gets covered for a large fraction by $\hat{f}(x) - \hat{f}(\trunc{x}{\theta'})$, or, the plot of $f(L_{\theta}(x))$ lies above some exponential curve and therefore $f(L_1(x))$ is small compared to $\hat{f}(x)$.}
  \label{fig:concentration}
\end{figure}
}

\ipcoproofoflater{concentration}{lem:concentration}{}{
 Let $k\in \mathbb{N}$ be such that $\frac1{16} \alpha \leq \frac1k \leq \frac18\alpha$. Note that this implies $k\geq 8$. 
 Let $z = f(L_1(x))$. 
 
 \begin{inclaim}If $2^{1/\alpha}z > \hat{f}(x)$, there exists $m \in \mathbb{N}$ such that 
   \begin{equation} \label{eq:conclaim} f(L_{\frac{k-m}{k}}(x)) < 2^m z.\end{equation}
   \end{inclaim}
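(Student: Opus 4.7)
The plan is to prove the contrapositive: assume that no $m \in \mathbb{N}$ satisfies the inequality, so that $f(L_{(k-m)/k}(x)) \geq 2^m z$ for every $m \geq 1$ (and trivially also for $m = 0$, since $f(L_1(x)) = z$), and derive $\hat{f}(x) \geq 2^{1/\alpha} z$, contradicting the hypothesis $2^{1/\alpha} z > \hat{f}(x)$.

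The key step is to convert these pointwise lower bounds on the level-set values into an integral lower bound on $\hat{f}(x)$. Because $\theta \mapsto L_\theta(x)$ is nonincreasing (in the inclusion order) and $f$ is monotone, for any $\theta \in \bigl( (k-m-1)/k,\, (k-m)/k \bigr]$ we have $L_\theta(x) \supseteq L_{(k-m)/k}(x)$ and hence $f(L_\theta(x)) \geq 2^m z$. As $m$ ranges over $\{0, 1, \ldots, k-1\}$, these intervals partition $(0,1]$, so
\[
  \hat{f}(x) \;=\; \int_0^1 f(L_\theta(x))\,d\theta \;\geq\; \sum_{m=0}^{k-1} \frac{2^m z}{k} \;=\; \frac{z\,(2^k - 1)}{k}.
\]

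It remains to check that $(2^k - 1)/k \geq 2^{1/\alpha}$ given the bounds $8/\alpha \leq k \leq 16/\alpha$. Using $(2^k - 1)/k \geq 2^{k-1}/k$, plugging $k \geq 8/\alpha$ into the exponent and $k \leq 16/\alpha$ into the denominator gives $(2^k - 1)/k \geq \alpha \cdot 2^{8/\alpha}/32$, so the required bound reduces to $\alpha \cdot 2^{7/\alpha} \geq 32$. This is easy to verify on $\alpha \in (0,1]$: the function $\alpha \mapsto \alpha \cdot 2^{7/\alpha}$ is decreasing on this interval (its logarithmic derivative $1/\alpha - (7 \ln 2)/\alpha^2$ is negative for $\alpha < 7 \ln 2$) and takes the value $128$ at $\alpha = 1$. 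I expect the only delicate point of the argument to be arranging the indexing of $m$ so that the $k$ sub-intervals exactly tile $(0,1]$; the closing numerical inequality is a standard exponential-beats-linear check.
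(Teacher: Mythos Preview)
Your proof is correct and follows essentially the same contrapositive approach as the paper: assume the inequality fails for all $m$, then use the resulting lower bounds on $f(L_{(k-m)/k}(x))$ together with the integral representation of $\hat{f}$ to deduce $\hat{f}(x) \geq 2^{1/\alpha} z$. The only cosmetic difference is that the paper uses just the single bound at $m = k-1$ (integrating over $[0,1/k]$ to get $\hat{f}(x) \geq 2^{k-1}z/k$), whereas you sum over all $m \in \{0,\ldots,k-1\}$ to get the slightly sharper $(2^k-1)z/k$; both reductions to $2^{1/\alpha}z$ are routine given $8/\alpha \leq k \leq 16/\alpha$.
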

   Before we prove the claim, let's see that it implies the lemma. Suppose that $2^{1/\alpha} f(L_1(x)) > \hat{f}(x)$, since otherwise we are done. The condition of the claim then holds, so take the smallest $m$ that satisfies~\eqref{eq:conclaim}, and let $\theta = \frac{k-m}{k}$. We claim that
   \begin{align*}
       \frac{\alpha}{32} f(L_\theta(x)) \leq \hat{f}(x) - \hat{f}(\trunc{x}{\theta}) .
    \end{align*}
    To see this we first rewrite the right hand side as an integral. 
    \begin{align} 
      \label{eq:concentration1}
      \hat{f}(x) - \hat{f}(\trunc{x}{\theta}) = \int_0^1 f(L_\eta(x)) \,d\eta - \int_0^1 f(L_\eta(\trunc{x}{\theta})) \,d\eta = \\\nonumber
     = \int_0^1 f(L_\eta(x)) \,d\eta - \int_0^\theta f(L_\eta(x)) \,d\eta = \int_\theta^1 f(L_\eta(x)) \,d\eta .
 \end{align}
 Recall that $f(L_\eta(x))$ is monotonically decreasing and that $m\geq 1$ so that $\theta+\frac1k = \frac{k-m+1}{k} \leq 1$. Then
    \begin{align} 
      \label{eq:concentration2}
      \int_\theta^1 f(L_\eta(x)) \,d\eta \geq  \int_\theta^{\theta+ \frac1k} f(L_\eta(x)) \,d\eta \geq \frac1k f(L_{\theta +\frac1k }(x)). 
 \end{align}
 Finally, we use the fact that $m$ is minimal, which implies that $f(L_{\frac{k-m+1}{k}}(x)) \geq 2^{m-1}z$, together with~\eqref{eq:concentration1} and~\eqref{eq:concentration2}: 
    \begin{align} 
      \label{eq:concentration3}
      \hat{f}(x) - \hat{f}(\trunc{x}{\theta}) \geq  \tfrac1k 2^{m-1}z = \tfrac1{2k} 2^m z > \tfrac{\alpha}{32} f(L_{\theta}(x)).
 \end{align}
 In the final inequality of~\eqref{eq:concentration3} we use that the fact that we chose $m$ to satisfy $2^m z > f(L_{\frac{k-m}{k}}(x))$ and $\frac1k \geq \frac1{16} \alpha$.

 Now we proceed to prove the claim. Suppose for contradiction that the condition of the claim holds but no $m$ satisfies inequality~\eqref{eq:conclaim}. Then, in particular it must hold that $f(L_{\frac1k}(x)) \geq 2^{k-1}z$ and therefore we obtain
 \[
     \hat{f}(x) \geq \int_0^{\frac1k} f(L_\eta(x)) \,d\eta \geq \frac1k f(L_{\frac1k}(x)) \geq \frac1k 2^{k-1} z.
 \]
 Since $k \geq 8$, $\tfrac1k 2^{k-1}z \geq 2^{k/2}z$.
 Since also $\frac1k \leq \frac18 \alpha$, we deduce
    \[ \hat{f}(x) \geq 2^{k/2}z \geq 2^{4/\alpha} z \geq 2^{1/\alpha} z,
    \]
    contradicting that $2^{1/\alpha}z >  \hat{f}(x)$. This proves the claim, and hence the lemma.
}

Consider now some iteration $i$ of the algorithm, and a particular choice of $t$.
If step \ref{step:add} is executed, then $\hat{f}(x^t)$ decreases by at least a $\frac{1}{32\log\log T}$ fraction of $f(L_\theta(x^t))$, which is an upper bound on the cost increase of the current solution by subadditivity.
Otherwise, by \Cref{lem:concentration} (with $\alpha=\frac{1}{\log\log T}$), 
$f(L_1(x^t)) \leq \hat{f}(x^t)/\log T$.
The total cost of sets chosen in step~\ref{step:leaf} in a single iteration is thus at most $\sum_{t \in [T]} \hat{f}(x^t)/\log T$.
So over all $\log T$ iterations, this incurs a total cost not more than the original objective value of the relaxation.

\ifipco{
Again, to complete the proof of \Cref{thm:sjrpapx}, we should argue that the algorithm runs in polynomial time; we postpone the straightforward details to the full version.
}{
  It remains to argue that we can implement the procedure in polynomial time. 
  There are $\log T$ iterations, each of which requires looping over at most $T$ days.
  The only difficult step is finding an 
  $\frac{1}{32\log \log T}$-supported set, if it exists. 
  But there are at most $N$ distinct values of $\theta$ at which $L_{\theta}(x^t)$ changes, and we only need to check these. 
}

\bibliographystyle{abbrv}
\bibliography{lit}

\end{document}